\newcommand{\Cc}{C_b}
\newcommand{\cA}{{\mathcal{A}}}
\newcommand{\cZ}{{\mathcal{Z}}}
\newcommand{\cS}{{\mathcal{S}}}
\newcommand{\Bor}{{\mathfrak{B}}}
\newcommand{\cI}{\mathcal{I}}
\newcommand{\df}{\doteq}
\newcommand{\reward}[2]{u_{#1}(#2)}
\newcommand{\Prob}{{\mathbb{P}}} % Probability
\newcommand{\Exp}{{\mathbb{E}}} % Expectation
\newcommand{\sF}{{\mathfrak{F}}}    % sigma-field
\newcommand{\supnorm}[1]{{\lVert}#1{\rVert}_{\infty}}
\newcommand{\QED}{\quad$\bullet$}
\newcommand{\tr}{^{\mathrm T}}
\newcommand{\magn}[1]{\left\vert #1 \right\vert}
\begin{document}
\title{Stochastic Stability Analysis of Perturbed Learning Automata with Constant Step-Size in Strategic-Form Games\thanks{This work has been partially supported by the European Union grant EU H2020-ICT-2014-1 project RePhrase (No. 644235).}}
\titlerunning{}  % abbreviated title (for running head)
%                                     also used for the TOC unless
%                                     \toctitle is used
%
\author{Georgios C. Chasparis}
\authorrunning{Georgios C. Chasparis} % abbreviated author list (for running head)
%
%%%% list of authors for the TOC (use if author list has to be modified)
\tocauthor{Georgios C. Chasparis}
\institute{Software Competence Center Hagenberg GmbH, Softwarepark 21, A-4232 Hagenberg, Austria\\
\email{georgios.chasparis@scch.at},%\\ WWW home page: \texttt{http://www.scch.at}
}

\maketitle              % typeset the title of the contribution

%%%%%%%%%%%%%%%%%%%%%%%%%%%%%%%%%%%%%%%%%%%%%%%%%%%%%%%%%%%%%%%%%%%%%%%%%%%%%%%%
\begin{abstract}
This paper considers a class of reinforcement-learning that belongs to the family of Learning Automata and provides a stochastic-stability analysis in strategic-form games. For this class of dynamics, convergence to pure Nash equilibria has been demonstrated only for the fine class of potential games. Prior work primarily provides convergence properties of the dynamics through stochastic approximations, where the asymptotic behavior can be associated with the limit points of an ordinary-differential equation (ODE). However, analyzing global convergence through the ODE-approximation requires the existence of a Lyapunov or a potential function, which naturally restricts the applicabity of these algorithms to a fine class of games. To overcome these limitations, this paper introduces an alternative framework for analyzing stochastic-stability that is based upon an explicit characterization of the (unique) invariant probability measure of the induced Markov chain. % In particular, we demonstrate that in all strategic-form games satisfying the Positive-Utility Property, the support of the invariant probability measure coincides with the set of pure strategy profiles. % This extends prior work where nonconvergence to mixed strategy profiles may only be excluded under strong conditions in the payoff matrix (e.g., existence of a potential function). 
\end{abstract}

\section{Introduction} \label{sec:Introduction}

Recently, multi-agent formulations have been utilized to tackle distributed optimization problems, since communication and computation complexity might be an issue in centralized optimization problems. % Examples include the cooperative game formulation for allocating bandwidth in grid computing \cite{Sub08}, the non-cooperative game formulation in the problem of medium access protocols in communications \cite{Tembine09}, for allocating resources in cloud computing \cite{Wei10} and for addressing consensus and sensor coverage \cite{MardenArslanShamma09}. 
In such formulations, decisions are usually taken in a repeated fashion, where agents select their next actions based on their \emph{own} prior experience of the game. 

The present paper discusses a class of reinforcement-learning dynamics, that belongs to the large family of Learning Automata \cite{Tsetlin73,Narendra89}, within the context of (non-cooperative) strategic-form games. %In this class of dynamics, agents are assumed bounded rational. 
In this class of dynamics, agents are repeatedly involved in a game with a fixed payoff-matrix, and they need to decide which action to play next having only access to their \emph{own} prior actions and payoffs. In Learning Automata, agents build their confidence over an action through repeated selection of this action and proportionally to the reward received from this action. Naturally, it has been utilized to analyze human-like (bounded) rationality \cite{Arthur93}.

Reinforcement learning has been applied in evolutionary economics, for modeling human and economic behavior \cite{Arthur93,BorgersSarin97,Erev98,HopkinsPosch05,Beggs05}. %, and sociology, for modeling social network formation \cite{SkyrmsPemantle00,BonacichLiggett03}. 
It is also highly attractive to several engineering applications, since agents do not need to know neither the actions of the other agents, nor their own utility function. It has been utilized for system identification and pattern recognition \cite{ThathacharSastry04}, distributed network formation and coordination problems \cite{ChasparisShamma11_DGA}.

In strategic-form games, the main goal is to derive conditions under which convergence to Nash equilibria can be achieved. In social sciences, deriving such conditions may be important for justifying emergence of certain social phenomena. %, e.g., the emergence of certain social structures \cite{SkyrmsPemantle00}. 
In engineering, convergence to Nash equilibria may also be desirable in distributed optimization problems, when the set of optimal solutions coincides with the set of Nash equilibria. 

In Learning Automata, deriving conditions under which convergence to Nash equilibria is achieved may not be a trivial task. In particular, there are two main difficulties: a) excluding convergence to pure strategies that are \emph{not} Nash equilibria, and b) excluding convergence to mixed strategy profiles. As it will be discussed in detail in a forthcoming Section~\ref{sec:ReinforcementLearning}, for some classes of (discrete-time) reinforcement-learning algorithms, convergence to non-Nash pure strategies may be achieved with positive probability. Moreover, excluding convergence to mixed strategy profiles may only be achieved under strong conditions in the utilities of the agents, (e.g., existence of a potential function).

In the present paper, we consider a class of (discrete-time) reinforcement-learning algorithms introduced in \cite{ChasparisShamma11_DGA} that is closely related to existing algorithms for modeling human-like behavior, e.g., \cite{Arthur93}. The main difference with prior reinforcement learning schemes lies in a) the step-size sequence, and b) the perturbation (or \emph{mutations}) term. The step-size sequence is assumed constant, thus introducing a fading-memory effect of past experiences in each agent's strategy. On the other hand, the perturbation term introduces errors in the selection process of each agent. % Both these two elements naturally capture human-like characteristics. At the same time though, these two elements constitute equilibrium selection tools which may be used for designing a desirable asymptotic behavior.
Both these two features can be used for designing a desirable asymptotic behavior.

We provide an analytical framework for deriving conclusions over the asymptotic behavior of the dynamics that is based on an explicit characterization of the invariant probability measure of the induced Markov chain. In particular, \emph{we show that in all strategic-form games satisfying the Positive-Utility Property, the support of the invariant probability measure coincides with the set of pure strategy profiles}. This extends prior work where nonconvergence to mixed strategy profiles may only be excluded under strong conditions in the payoff matrix (e.g., existence of a potential function). A detailed discussion of the exact contributions of this paper is provided in the forthcoming Section~\ref{sec:ReinforcementLearning}. At the end of the paper, we also provide a brief discussion over how the proposed framework can be further utilized to provide a more detailed characterization of the stochastically stable states (e.g., excluding convergence to non-Nash pure strategy profiles). Due to space limitations, this analysis is not presented in this paper.

In the remainder of the paper, Section~\ref{sec:ReinforcementLearning} presents a class of reinforcement-learning dynamics, related work and the main contribution of this paper. Section~\ref{sec:ConvergenceAnalysis} provides the main result of this paper (Theorem~\ref{Th:StochasticStability}), where the set of stochastically stable states is characterized. A short discussion is also provided over the significance of this result and how it can be utilized to provide further conclusions. Finally, Section~\ref{sec:TechnicalDerivation} provides the technical derivation of the main result and Section~\ref{sec:Conclusions} presents concluding remarks.

{\bf Notation:}
\begin{itemize}
\item For a Euclidean topological space $\mathcal{X}\subset\mathbb{R}^{n}$, let $\mathcal{N}_{\delta}(x)$ denote the $\delta$-neighborhood of $x\in\mathbb{R}^{n}$, i.e.,
\begin{equation*}
\mathcal{N}_{\delta}(x) \df \{y\in\mathcal{X}:|x-y|<\delta\},
\end{equation*}
where $|\cdot|$ denotes the Euclidean distance.
\item $e_j$ denotes the \emph{unit vector} in $\mathbb{R}^{n}$ where its $j$th entry is equal to 1 and all other entries is equal to 0.
\item $\Delta(n)$ denotes the \emph{probability simplex} of dimension $n$, i.e.,
\begin{equation*}
\Delta(n) \df \left\{ x\in\mathbb{R}^{n} : x\geq{0}, \mathbf{1}\tr x=1 \right\}.
\end{equation*}
%\item $\Pi_{\Delta(n)}:\mathbb{R}^{n}\to\Delta(n)$ denotes the projection to the probability simplex, i.e.,
%\begin{equation*}
%\Pi_{\Delta}[x] \df \arg\min_{y\in\Delta(n)}|x-y|;
%\end{equation*}
\item For some set $A$ in a topological space $\cZ$, let $\mathbb{I}_{A}:\cZ\to\{0,1\}$ denote the index function, i.e.,
\begin{eqnarray*}
\mathbb{I}_{A}(x) \df \begin{cases}
1 & \mbox{ if } x\in{A}, \\
0 & \mbox{ else.}
\end{cases}
\end{eqnarray*}

\item $\delta_x$ denotes the Dirac measure at $x$.

\item Let $A$ be a finite set and let any (finite) probability distribution $\sigma\in\Delta(\magn{A})$. The random selection of an element of $A$ will be denoted ${\rm rand}_{\sigma}[A]$. If $\sigma=(\nicefrac{1}{\magn{A}},...,\nicefrac{1}{\magn{A}})$, i.e., it corresponds to the uniform distribution, the random selection will be denoted by ${\rm rand}_{\rm unif}[A]$.

\end{itemize}

%%%%%%%%%%%%%%%%%%%%%%%%%%%%%%%%%%%%%%%%%%%%%%%%%%%%%%%%%%%%%%%%%%%%%%%%%%%%%%%%%%
\section{Reinforcement Learning}	\label{sec:ReinforcementLearning}

\subsection{Terminology}

We consider the standard setup of finite strategic-form games. Consider a finite set of agents (or \emph{players}) $\mathcal{I} = \{1,...,n\}$, and let each agent have a finite set of actions $\mathcal{A}_i$. Let $\alpha_i\in\mathcal{A}_i$ denote any such action of agent $i$. The set of \emph{action profiles} is the Cartesian product $\mathcal{A}\df\mathcal{A}_1\times\cdots\times\mathcal{A}_n$ and let $\alpha=(\alpha_1,...,\alpha_n)$ be a representative element of this set. We will denote $-i$ to be the complementary set $\cI\backslash{i}$ and often decompose an action profile as follows $\alpha=(\alpha_i,\alpha_{-i})$. The \emph{payoff/utility function} of agent $i$ is a mapping $\reward{i}{\cdot}:\mathcal{A}\to\mathbb{R}$. A \emph{strategic-form game} is defined by the triple $\langle{\mathcal{I},\mathcal{A},\{\reward{i}{\cdot}\}_i}\rangle$. 

\textit{\textbf{For the remainder of the paper}}, we will be concerned with strategic-form games that satisfy the \emph{\textbf{Positive-Utility Property}}. %, that is, \emph{for each agent and for each action profile, the utility function takes on only positive values}. In particular, we consider the following assumption.

\begin{property}[Positive Utility Property]		\label{P:PositiveUtilityProperty}
For any agent $i\in\mathcal{I}$ and any action profile $\alpha\in\mathcal{A}$, $\reward{i}{\alpha}>0$.
\end{property}

\subsection{Reinforcement-learning algorithm}

We consider a form of reinforcement learning that belongs to the general class of \emph{learning automata} \cite{Narendra89}. In learning automata, each agent updates a finite probability distribution $x_i\in\Delta(\magn{\mathcal{A}_i})$ representing its beliefs with respect to the most profitable action. 
% For example, an algorithm may be defined according to which, if agent $i$ selects action $j$ at time $t$ which results in a favorable payoff $\reward{i}{\alpha(t)}$, then the action probability $x_{ij}(t)$ will increase and all the other components of $x_i(t)$ will decrease. 
The precise manner in which $x_i(t)$ changes at time $t$, depending on the performed action and the response of the environment, completely defines the reinforcement learning model.

The proposed reinforcement learning model is described in Table~\ref{Tb:ReinforcementLearning}. At the first step, each agent $i$ updates its action given its current strategy vector $x_i(t)$. Its selection is slightly perturbed by a perturbation (or \emph{mutations}) factor $\lambda>0$, such that, with a small probability $\lambda$ agent $i$ follows a uniform strategy (or, it \emph{trembles}).  At the second step, agent $i$ evaluates its new selection by collecting a utility measurement, while in the last step, agent $i$ updates its strategy vector given its new experience.
\begin{table}[t!]
\boxed{
\begin{minipage}{0.98\textwidth}
%\vspace{5pt}
%{\bf Algorithm.} \emph{Perturbed Reinfocement Learning}\\[-2pt]
%\noindent\rule{\textwidth}{0.4pt}\\[5pt]
At fixed time instances $t=1,2,...$, and for each agent $i\in\cI$, the following steps are executed recursively. Let $\alpha_i(t)$ and $x_i(t)$ denote the current action and strategy of agent $i$, respectively. 
\begin{enumerate}
\item (\emph{\textbf{action update}}) Agent $i\in\cI$ selects a new action $\alpha_i(t+1)$ as follows: 
\begin{eqnarray}	\label{eq:ActionUpdate}
\alpha_i(t+1) = \begin{cases}
{\rm rand}_{x_i(t)}[\mathcal{A}_i], & \mbox{ with probability } 1-\lambda, \\
{\rm rnad}_{\rm unif}[\mathcal{A}_i], & \mbox{ with probability } \lambda,
\end{cases}
\end{eqnarray} 
for some small perturbation factor $\lambda>0$.

\item (\emph{\textbf{evaluation}}) Agent $i$ applies its new action $\alpha_i(t+1)$ and retrieves a measurement of its utility function $\reward{i}{\alpha(t+1)}>0$. 

\item (\emph{\textbf{strategy update}}) Agent $i$ revises its strategy vector $x_i\in\Delta(\magn{\mathcal{A}_i})$ as follows: 
\begin{eqnarray}	\label{eq:ReinforcementLearningModel}
\lefteqn{x_i(t+1)} \cr & = & x_i(t) + \epsilon \cdot \reward{i}{\alpha(t+1)} \cdot [e_{\alpha_i(t+1)} - x_i(t)] \cr & \df & \mathcal{R}_{i}(\alpha(t+1),x_i(t)),
\end{eqnarray}
for some constant step size $\epsilon>0$.

\end{enumerate}
\end{minipage}
}
\caption{Perturbed Reinforcement Learning.}
\label{Tb:ReinforcementLearning}
\end{table}

Here we identify actions $\mathcal{A}_i$ with vertices of the simplex, $\{e_1,...,e_{\magn{\mathcal{A}_i}}\}$. For example, if agent $i$ selects its $j$th action at time $t$, then $e_{\alpha_i(t)}\equiv e_j$. Note that by letting the step-size $\epsilon$ to be sufficiently small and since the utility function $\reward{i}{\cdot}$ is uniformly bounded in $\mathcal{A}$, $x_i(t)\in\Delta(\magn{\mathcal{A}_i})$ for all $t$. 

In case $\lambda=0$, the above update recursion will be referred to as the \emph{unperturbed reinforcement learning}.

\subsection{Related work}

\subsubsection*{Erev-Roth type dynamics}

In prior reinforcement learning in games, analysis has been restricted to decreasing step-size sequences $\epsilon(t)$ and $\lambda=0$. More specifically, in \cite{Arthur93}, the step-size sequence of agent $i$ is $\epsilon_i(t) = 1/(ct^{\nu}+\reward{i}{\alpha(t+1)}$ for some positive constant $c$ and for $0<\nu<1$ (in the place of the constant step size $\epsilon$ of (\ref{eq:ReinforcementLearningModel})). A comparative model is also used by \cite{HopkinsPosch05}, with $\epsilon_i(t) = 1/(V_i(t)+\reward{i}{\alpha(t+1)})$, where $V_i(t)$ is the accumulated benefits of agent $i$ up to time $t$ which gives rise to an urn process %introduced by 
\cite{Erev98}. Some similarities are also shared with the Cross' learning model of \cite{BorgersSarin97}, where $\epsilon(t)=1$ and $\reward{i}{\alpha(t)}\leq{1}$, and its modification presented in \cite{Leslie04}, where $\epsilon(t)$, instead, is assumed decreasing.

The main difference of the proposed reinforcement-learning algorithm (Table~\ref{Tb:ReinforcementLearning}) lies in the perturbation parameter $\lambda>0$ which was first introduced and analyzed in \cite{ChasparisShamma11_DGA}. A state-dependent perturbation term has also been investigated in \cite{ChasparisShammaRantzer15}. The perturbation parameter may serve as an equilibrium selection mechanism, since \emph{it excludes convergence to non-Nash action profiles}. It resolved one of the main issues of several (discrete-time) reinforcement-learning algorithms, that is the positive probability of convergence to non-Nash action profiles under some conditions in the payoff function and the step-size sequence. 

This issue has also been raised by \cite{Posch97,HopkinsPosch05}. Reference \cite{Posch97} considered the model by \cite{Arthur93} and showed that convergence to non-Nash pure strategy profiles can be excluded as long as $c>\reward{i}{\alpha}$ for all $i\in\mathcal{I}$ and $\nu=1$. On the other hand, convergence to non-Nash action profiles was not an issue with the urn model of \cite{Erev98} (as analyzed in \cite{HopkinsPosch05}). However, the use of an urn-process type step-size sequence significantly reduces the applicability of the reinforcement learning scheme. In conclusion, the perturbation parameter $\lambda>0$ may serve as a design tool for reinforcing convergence to Nash equilibria without necessarily employing an urn-process type step-size sequence. For engineering applications this is a desirable feature.

Although excluding convergence to non-Nash pure strategies can be guaranteed by using $\lambda>0$, establishing convergence to pure Nash equilibria may still be an issue, since it further requires excluding convergence to mixed strategy profiles. As presented in \cite{ChasparisShammaRantzer15}, this can be guaranteed only under strong conditions in the payoff matrix. For example, as shown in \cite[Proposition~8]{ChasparisShammaRantzer15}, excluding convergence to mixed strategy profiles requires a) the existence of a potential function, b) conditions over the second gradient of the potential function. Requiring the existence of a potential function considerably restricts the class of games where equilibrium selection can be described. Furthermore, condition (b) may not easily be verified in games of large number of players or actions.

\subsubsection*{Learning automata}

Certain forms of learning automata have been shown to converge to Nash equilibria in some classes of strategic-form games. For example, in \cite{Narendra89}, and for a generalized nonlinear reward-inaction scheme, convergence to Nash equilibrium strategies can be shown in identical interest games. Similar are the results presented in \cite{Sastry94} for a linear reward-inaction scheme. These convergence results are restricted to games of payoffs in $[0,1]$. Extension to a larger class of games is possible if \emph{absolute monotonicity} (cf.,~\cite[Definition~8.1]{Narendra89}) is shown (similarly to the discussion in \cite[Proposition~8]{ChasparisShammaRantzer15}). % However, such condition to check is not an easy task (check for example the analysis presented in \cite{ChasparisShammaRantzer15}).
%
% Similar are the results presented in \cite{Sastry94}, which analyzes a linear-reward inaction scheme. Again convergence in identical interest games and the consistent labelling problem. Again the difficulty for extending such results to a larger family of games lies in the fact that nonconvergence to mixed Nash equilibria may only be excluded under strong monotonicity conditions. Another constraint is the assumption that the rewards received from these games are always restricted in $[0,1]$.

Reference~\cite{verbeeck_exploring_2007} introduced a class of linear reward-inaction schemes in combination with a coordinated exploration phase so that convergence to the efficient Nash equilibrium is achieved. However, coordination of the exploration phase requires communication between the players.

Recently, work by the author \cite{ChasparisShamma11_DGA} has introduced a new class of learning automata (namely, perturbed learning automata) which can be applied in games with no restriction in the payoff matrix. Furthermore, a small perturbation factor also influences the decisions of the players, through which convergence to non-Nash pure strategy profiles can be excluded. However, to demonstrate global convergence, a monotonicity condition still needs to be established \cite{ChasparisShammaRantzer15}.

\subsubsection*{$Q$-learning}

Similar questions of convergence to Nash equilibria also appear in alternative reinforcement learning formulations, such as approximate dynamic programming methodologies and $Q$-learning. However, this is usually accomplished under a stronger set of assumptions, which increases the computational complexity of the dynamics. For example, the Nash-Q learning algorithm of \cite{hu_nash_2003} addresses the problem of maximizing the discounted expected rewards for each agent by updating an approximation of the cost-to-go function (or $Q$-values). Alternative objectives may be used, such as the minimax criterion of \cite{littman_markov_1994}. However, it is indirectly assumed that agents need to have full access to the joint action space and the rewards received by the other agents. 

More recently, reference \cite{chapman_convergent_2013} introduces a $Q$-learning scheme in combination with either adaptive play or better-reply dynamics in order to attain convergence to Nash equilibria in potential games \cite{MondererShapley96} or weakly-acyclic games. However, this form of dynamics require that each player observes the actions selected by the other players, since a $Q$-value needs to be assigned in each joint action.

When the evaluation of the $Q$-values is totally independent, as in the individual $Q$-learning in \cite{leslie_individual_2005}, then convergence to Nash equilibria has been shown only for 2-player zero-sum games and 2-player partnership games with countably many Nash equilibria. Currently, there are no convergent results in games in multi-player games.

\subsubsection*{Payoff-based learning}

The aforementioned types of dynamics can be considered as a form of payoff-based learning dynamics, since adaptation is only governed by the perceived utility of the players. Recently, there have been several attempts for establishing convergence to Nash equilibria through alternative payoff-based learning dynamics, (see, e.g., the benchmark-based dynamics of \cite{marden_payoff_2009}, or the aspiration-based dynamics in \cite{ChasparisAriShamma13_SIAM}). For these type of dynamics, convergence to Nash equilibria can be established without requiring any strong monotonicity property (e.g., in multi-player weakly-acyclic games in \cite{marden_payoff_2009}). However, an investigation is required with respect to the resulting convergence rates as compared to the dynamics incorporating policy iterations (e.g., the Erev-Roth type of dynamics or the learning automata discussed above).

\subsection{Objective}

This paper provides an analytical framework for analyzing convergence in multi-player strategic-form games when players implement a class of perturbed learning-automata. We wish to impose no strong monotonicity assumptions in the structure of the game (e.g., the existence of a potential function). We provide a characterization of the invariant probability measure of the induced Markov chain that shows that only the pure-strategy profiles belong to its support. Thus, we implicitly exclude convergence to any mixed strategy profile (including mixed Nash equilibria). This result imposes no restrictions in the payoff matrix other than the Positive-Utility Property.

%\subsection{Contribution \& objective}
%
%The present paper goes one step further by analyzing the convergence properties of (\ref{eq:ReinforcementLearningModel}) under a constant step-size sequence. The benefit from employing constant step-size sequences lies in the resulting adaptivity of the algorithm in changes in the environment (e.g., the payoff function) or in the number of players, which might be relevant for many engineering applications. So far, the majority of the convergence analysis of this type of reinforcement-learning models is employing the ODE method for stochastic approximations (see, e.g., ). An interesting observation is the fact that the associated continuous-time ODE corresponds to the well-known \emph{replicator dynamics}. To this end, global convergence results are primarily restricted to potential games, since in such cases, the potential function can be directly used to formulate an appropriate Lyapunov function. Stochastic stability arguments 

%%%%%%%%%%%%%%%%%%%%%%%%%%%%%%%%%%%%%%%%%%%%%%%%%%%%%%%%%%%%%%%%%%%%%%%%%%%%%%%%%%
\section{Convergence Analysis}	\label{sec:ConvergenceAnalysis}

\subsection{Terminology and notation}	\label{sec:Terminology}

Let $\cZ\df \mathcal{A}\times \mathbf{\Delta}$, where $\mathbf{\Delta}\df\Delta(\magn{\mathcal{A}_1})\times\ldots\times\Delta(\magn{\mathcal{A}_n})$, i.e., pairs of joint actions $\alpha$ and nominal strategy profiles $x$. The set $\mathcal{A}$ is endowed with the discrete topology, $\mathbf{\Delta}$ with its usual Euclidean topology, and $\cZ$ with the corresponding product topology. We also let $\Bor(\cZ)$ denote the Borel $\sigma$-field of $\cZ$, and $\mathfrak{P}(\cZ)$ the set of probability measures on $\Bor(\cZ)$ endowed with the Prohorov topology, i.e., the topology of weak convergence. The algorithm introduced in Table~\ref{Tb:ReinforcementLearning} defines an $\cZ$-valued Markov chain. Let $P_{\lambda}:\cZ\times\Bor(\cZ)\to[0,1]$ denote its transition probability function (t.p.f.), parameterized by $\lambda>0$. We refer to the process with $\lambda>0$ as the \emph{perturbed process}. Let also $P:\cZ\times\Bor(\cZ)$ denote the t.p.f. of the \emph{unperturbed process}, i.e., when $\lambda=0$.

We let $C_b(\cZ)$ denote the Banach space of real-valued continuous functions on $\cZ$ under the sup-norm (denoted by $\|\cdot\|_{\infty}$) topology. For $f\in\Cc(\cZ)$, define
\begin{equation*}
P_{\lambda}f(z) \df \int_{\cZ}P_{\lambda}(z,dy)f(y),
\end{equation*}
and 
\begin{equation*}
\mu[f] \df \int_{\cZ}\mu(dx)f(z), \mbox{ for } \mu\in\mathfrak{P}(\cZ).
\end{equation*}

The process governed by the unperturbed process $P$ will be denoted by $\{Z_{t} : t\ge0\}$. Let $\Omega\df\cZ^{\infty}$ denote the canonical path space, i.e., an element $\omega\in\Omega$ is a sequence $\{\omega(0),\omega(1),\dotsc\}$, with $\omega(t)= (\alpha(t),x(t))\in\cZ$. We use the same notation for the elements $(\alpha,x)$ of the space $\cZ$ and for the coordinates of the process $Z_{t}=(\alpha(t),x(t))$.
Let also $\Prob_{z}[\cdot]$ denote the unique probability measure induced by the unperturbed process $P$ on the product $\sigma$-algebra of $\cZ^{\infty}$, initialized at $z=(\alpha,x)$, and $\Exp_{z}[\cdot]$ the corresponding expectation operator. Let also $\sF_{t}$, %$\df \sigma(Z_{\tau}\,,~ \tau\le t)\,,$ 
$t\geq{0}$, denote the $\sigma$-algebra generated by $\{Z_{\tau},~\tau\le{t}\}$.

\subsection{Stochastic stability}	\label{sec:StochasticStability}

First, we note that both $P$ and $P_{\lambda}$ ($\lambda>0$) satisfy the \emph{weak Feller property} (cf.,~\cite[Definition~4.4.2]{Lerma03}).
\begin{proposition}		\label{Pr:WeakFeller}
Both the unperturbed process $P$ ($\lambda=0$) and the perturbed process $P_{\lambda}$ ($\lambda>0$) have the weak Feller property.
\end{proposition}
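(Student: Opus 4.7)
The plan is to verify the weak Feller property directly from its definition, namely that for every $f\in\Cc(\cZ)$ the map $z\mapsto P_{\lambda}f(z)$ lies in $\Cc(\cZ)$. Boundedness is automatic since $\supnorm{P_{\lambda}f}\le\supnorm{f}$, so only continuity needs to be established, and only the case $\lambda>0$ need be treated explicitly since the argument specializes to $\lambda=0$ without change.

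The first step is to unpack the one-step kernel. Because $\mathcal{A}$ is finite and the strategy update $\mathcal{R}_{i}(\alpha',x_{i})= x_{i}+\epsilon\,\reward{i}{\alpha'}\,[e_{\alpha'_{i}}-x_{i}]$ is a deterministic function of the newly drawn action profile $\alpha'$ and the current strategy $x$, the kernel from $z=(\alpha,x)$ takes the explicit finite-sum form
$$P_{\lambda}((\alpha,x),\,\cdot\,)=\sum_{\alpha'\in\mathcal{A}}\pi_{\lambda}(\alpha'\mid x)\,\delta_{(\alpha',\mathcal{R}(\alpha',x))}(\,\cdot\,),$$
where, using the independence of the per-agent action draws in (\ref{eq:ActionUpdate}), the selection probability factorizes as
$$\pi_{\lambda}(\alpha'\mid x)=\prod_{i\in\mathcal{I}}\Bigl[(1-\lambda)\,x_{i,\alpha'_{i}}+\lambda/\magn{\mathcal{A}_{i}}\Bigr],$$
and $\mathcal{R}(\alpha',x)\df(\mathcal{R}_{i}(\alpha',x_{i}))_{i\in\mathcal{I}}$. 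For any $f\in\Cc(\cZ)$ this yields the closed-form expression
$$P_{\lambda}f(\alpha,x)=\sum_{\alpha'\in\mathcal{A}}\pi_{\lambda}(\alpha'\mid x)\,f\bigl(\alpha',\mathcal{R}(\alpha',x)\bigr).$$

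The second step is to read off continuity from this representation. Since $\mathcal{A}$ carries the discrete topology and $\cZ$ the product topology, a convergent sequence $z_{n}=(\alpha_{n},x_{n})\to(\alpha,x)$ in $\cZ$ must satisfy $\alpha_{n}=\alpha$ for all sufficiently large $n$, so it suffices to prove continuity in $x$ for each fixed $\alpha$. The weights $\pi_{\lambda}(\alpha'\mid x)$ are polynomials in $x$, the update $x\mapsto\mathcal{R}(\alpha',x)$ is affine, and $f$ is continuous, so each summand is continuous in $x$; the sum being finite, $P_{\lambda}f$ is continuous on $\cZ$. Setting $\lambda=0$ simply replaces the weights by $\pi_{0}(\alpha'\mid x)=\prod_{i}x_{i,\alpha'_{i}}$, which are again polynomial, so the same conclusion holds for $P$.

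I do not expect a genuine obstacle here: the content of the proposition is merely that the kernel is a finite mixture of Dirac masses whose weights depend polynomially on $x$ and whose support points depend continuously (indeed affinely) on $x$. The only point requiring mild care is the handling of the discrete action component, which is resolved by the observation that convergence in the discrete topology on $\mathcal{A}$ is eventual equality. An alternative, more abstract route would invoke a standard composition result stating that a continuous deterministic update preceded by a weakly continuous stochastic kernel yields a weakly Feller kernel, but the explicit finite-sum derivation above is the most self-contained.
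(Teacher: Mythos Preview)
Your argument is correct and is considerably more direct than the paper's own proof. The paper establishes the weak Feller property via the Portmanteau-type characterization: it fixes an open set $O\in\Bor(\cZ)$, writes out $P(z^{(k)},O)$ in terms of indicator functions $\mathbb{I}_{\mathcal{P}_{\mathcal{X}_i}(O)}(\mathcal{R}_i(\alpha,x_i^{(k)}))$, and then performs a case split according to whether $x$ lies in the interior or on the boundary of $\mathbf{\Delta}$, ultimately verifying $\liminf_{k\to\infty}P(z^{(k)},O)\ge P(z,O)$ and invoking \cite[Proposition~7.2.1]{Lerma03}. Because indicator functions of open sets are not continuous, the paper is forced into this boundary analysis. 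You instead work directly with $f\in\Cc(\cZ)$ and exploit the explicit structure of the kernel as a finite mixture of Dirac masses with polynomial weights and affine support points; continuity of $P_{\lambda}f$ is then immediate without any case distinction. Your route is more elementary and also makes transparent the (incidental) fact that the transition does not depend on the current action component $\alpha$ at all. The paper's approach buys nothing extra here; both yield the same conclusion, but your derivation is the cleaner one.
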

\begin{proof}
Let us consider any sequence $\{Z^{(k)}=(\alpha^{(k)},x^{(k)})\}$ such that $Z^{(k)}\to{Z}=(\alpha,x)\in\cZ$.

For the unperturbed process governed by $P(\cdot,\cdot)$, and for any open set $O\in\Bor(\cZ)$, the following holds:
\begin{eqnarray*}
\lefteqn{P(Z^{(k)}=(\alpha^{(k)},x^{(k)}),O) }\cr 
& = & \sum_{\alpha\in\mathcal{P}_{\cA}(O)}\Big\{\Prob_{Z^{(k)}}[{\rm rand}_{x_i^{(k)}}[\mathcal{A}_i]=\alpha_i, \forall i\in\mathcal{I}] \Big. \cdot \cr && \Big.\prod_{i=1}^{n}\Prob_{Z^{(k)}}[\mathcal{R}_{i}(\alpha,x_i^{(k)})\in\mathcal{P}_{\mathcal{X}_i}(O)] \Big\} \cr
& = & \sum_{\alpha\in\mathcal{P}_{\mathcal{A}}(O)} \Big\{\prod_{i=1}^{n}\mathbb{I}_{\mathcal{P}_{\mathcal{X}_i}(O)}(\mathcal{R}_i(\alpha,x_i^{(k)})) x_{i\alpha_i}^{(k)}\Big\},
\end{eqnarray*}
where $\mathcal{P}_{\mathcal{X}_i}(O)$ and $\mathcal{P}_{\mathcal{A}}(O)$ are the \emph{canonical projections} defined by the product topology. Similarly, we have: 
\begin{eqnarray*}
\lefteqn{P(Z=(\alpha,x),O) } \cr 
& = & \sum_{\alpha\in\mathcal{P}_{\mathcal{A}}(O)} \Big\{\prod_{i=1}^{n}\mathbb{I}_{\mathcal{P}_{\mathcal{X}_i}(O)}\left(\mathcal{R}_i\left(\alpha,x_i\right)\right) x_{i\alpha_i}\Big\}.
\end{eqnarray*}

(a) \emph{Consider the case $x\in\mathbf{\Delta}^o$,} i.e., $x$ belongs to the interior of $\mathbf{\Delta}$. For all $i\in\cI$, due to the continuity of $\mathcal{R}_i(\cdot,\cdot)$ with respect to its second argument, and the fact that $O$ is an open set, there exists $\delta>0$ such that $\mathbb{I}_{\mathcal{P}_{\mathcal{X}_i}(O)}(\mathcal{R}_i(\alpha,x_i)) = \mathbb{I}_{\mathcal{P}_{\mathcal{X}_i}(O)}(\alpha,y_i))$ for all $y_i\in\mathcal{N}_{\delta}(x_i)$. Thus, for any sequence  $Z^{(k)}=(\alpha^{(k)},x^{(k)})$ such that $Z^{(k)}\to Z=(\alpha,x)$, we have that 
$P(Z^{(k)},O) \to P(Z,O),$ as $k\to\infty$.

(b) \emph{Consider the case $x\in\partial\mathbf{\Delta}$,} i.e., $x$ belongs to the boundary of $\mathbf{\Delta}$.  Then, there exists $i\in\cI$ such that $x_i\in\partial\Delta(\magn{\mathcal{A}_i})$, i.e., there exists an action $j\in\cA_i$ such that $x_{ij}=0$. For any open set $O\in\Bor(\cZ)$, $x_i\notin\mathcal{P}_{\mathcal{X}_i}(O)$. Furthermore, for any $\alpha_i\in {\rm rand}_{x_i}[\cA_i]$, $\mathbb{I}_{\mathcal{P}_{\mathcal{X}_i}(O)}(\mathcal{R}_i((\alpha_i,\alpha_{-i}),x_i) = 0$ (since $x_{ij}=0$ and therefore $x_i$ cannot escape from the boundary). This directly implies that $P(Z=(\alpha,x),O)=0$. Construct a sequence $(\alpha^{(k)},x^{(k)})$ that converges to $(\alpha,x)$ such that $\alpha^{(k)}=\alpha$, $x_{i\alpha_i}^{(k)}>0$ and $x_{i}=e_{\alpha_i}$, i.e., the strategy of player $i$ converges to the vertex of action $\alpha_i$. Pick also $O\in\Bor(\cZ)$, such that $\mathbb{I}_{\mathcal{P}_{\mathcal{X}_i}(O)}(\mathcal{R}_i(\alpha,x_i^{(k)})) = 1$ for all large $k$. This is always possible by selecting an open set $O$ such that $x\in\partial\mathcal{P}_{\mathcal{X}}(O)$ and $x^{(k)}\in\mathcal{P}_{\mathcal{X}}(O)$ for all $k$. In this case, $\lim_{k\to\infty}P(Z^{(k)},O) = 1$. We conclude that for any sequence $Z^{(k)}=(\alpha^{(k)},x^{(k)})$ that converges to $Z=(\alpha,x)$, such that $x\in\partial\mathbf{\Delta}$, and for any open set $O\in\Bor(\cZ)$, $$\lim_{k\to\infty}P(Z^{(k)},O) \geq P(Z,O)=0.$$

By \cite[Proposition~7.2.1]{Lerma03}, we conclude that $P$ satisfies the weak Feller property. The same steps can be followed to show that $P_{\lambda}$ also satisfies the weak Feller property. \QED
\end{proof}

The measure $\mu_{\lambda}\in\mathfrak{P}(\cZ)$ is called an \emph{invariant probability measure} for $P_{\lambda}$ if
\begin{equation*}
(\mu_{\lambda}P_{\lambda})(A) \df \int_{\cZ}\mu_{\lambda}(dx)P_{\lambda}(z,A) = \mu_{\lambda}(A), \qquad A\in\Bor(\cZ).
\end{equation*}
Since $\cZ$ defines a locally compact separable metric space and $P$, $P_{\lambda}$ have the weak Feller property, they both admit an invariant probability measure, denoted $\mu$ and $\mu_{\lambda}$, respectively \cite[Theorem~7.2.3]{Lerma03}.

We would like to characterize the \emph{stochastically stable states} $z\in\cZ$ of $P_{\lambda}$, that is any state $z\in\cZ$ for which any collection of invariant probability measures $\{\mu_{\lambda}\in\mathfrak{P}(\cZ):\mu_{\lambda}P_{\lambda}=\mu_{\lambda},\lambda>0\}$ satisfies $\liminf_{\lambda\to{0}}\mu_{\lambda}(z)>0$. As the forthcoming analysis will show, the stochastically stable states will be a subset of the set of \emph{pure strategy states} (p.s.s.) defined as follows:
\begin{definition}[Pure Strategy State]	\label{def:PureStrategyState}
\textit{
A pure strategy state is a state $s=(\alpha,x)\in\cZ$ such that for all $i\in\mathcal{I}$, $x_i = e_{\alpha_i}$, i.e., $x_i$ coincides with the vertex of the probability simplex $\Delta(\magn{\mathcal{A}_i})$ which assigns probability 1 to action $\alpha_i$.
}
\end{definition}

We will denote the set of pure strategy states by $\mathcal{S}$.

\begin{theorem}[Stochastic Stability]		\label{Th:StochasticStability}
There exists a unique probability vector $\pi=(\pi_1,...,\pi_{\magn{\mathcal{S}}})$ such that for any collection of invariant probability measures $\{\mu_{\lambda}\in\mathfrak{P}(\cZ):\mu_{\lambda}P_{\lambda}=\mu_{\lambda}, \lambda>0\}$, the following hold:
\begin{itemize}
\item[(a)] $\lim_{\lambda\to{0}}\mu_{\lambda}(\cdot) = \hat{\mu}(\cdot) \df \sum_{s\in\mathcal{S}}\pi_s\delta_s(\cdot),$ where convergence is in the weak sense.
\item[(b)] The probability vector $\pi$ is an invariant distribution of the (finite-state) Markov process $\hat{P}$, such that, for any $s,s'\in\mathcal{S}$, 
\begin{equation}	\label{eq:FiniteStateMarkovChain}
\hat{P}_{ss'} \df \lim_{t\to\infty} QP^t(s,\mathcal{N}_{\delta}(s')),
\end{equation}
for any $\delta>0$ sufficiently small, where $Q$ is the t.p.f. corresponding to \emph{only one player trembling} (i.e., following the uniform distribution of (\ref{eq:ActionUpdate})).
\end{itemize}
\end{theorem}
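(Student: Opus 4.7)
The plan is to establish both parts by a stochastic-stability argument adapted to the continuous state space $\cZ=\cA\times\mathbf{\Delta}$, which is compact (finite $\cA$ with the discrete topology, $\mathbf{\Delta}$ a product of simplices). For part~(a), I would first extract weak limits: compactness renders $\{\mu_{\lambda}\}_{\lambda>0}$ automatically tight, so every sequence $\lambda_k\to 0$ admits a weakly convergent subsubsequence $\mu_{\lambda_k}\Rightarrow\mu^{*}$. Using Proposition~\ref{Pr:WeakFeller} together with the observation that $P_{\lambda}f\to Pf$ uniformly on $\cZ$ for every $f\in\Cc(\cZ)$ -- the uniform trembling perturbs each transition probability by an amount of order $\lambda$ -- one passes to the limit in the invariance identity $\mu_{\lambda_k}[P_{\lambda_k}f]=\mu_{\lambda_k}[f]$ to conclude $\mu^{*}[Pf]=\mu^{*}[f]$. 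Thus every weak limit of $\{\mu_{\lambda}\}$ is an invariant measure for the unperturbed kernel $P$.

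The heart of part~(a) is then to show $\mathrm{supp}(\mu^{*})\subseteq\cS$. I would introduce a Lyapunov function depending only on the strategy component -- for instance $V(\alpha,x)=\sum_{i\in\cI}\sum_{j\in\cA_i}x_{ij}(1-x_{ij})$ -- which vanishes exactly on $\cS$ and is bounded above. Two structural properties of the dynamics together produce a strictly negative drift of $V$ along $P$ outside any open neighborhood of $\cS$. First, $\cS$ is absorbing, since $x_i=e_{\alpha_i}$ forces $\alpha_i(t+1)=\alpha_i$ with probability one and hence leaves $x_i$ unchanged by~(\ref{eq:ReinforcementLearningModel}). Second, by the Positive-Utility Property the sampled coordinate jumps by $\epsilon u_i(\alpha)(1-x_{i\alpha_i}(t))>0$ whenever $x_{i\alpha_i}<1$, so the one-step conditional variance of $x_i$ under $P$ is uniformly bounded below on any compact subset of $\cZ\setminus\cS$. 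A Foster--Lyapunov argument should then yield $PV(z)\leq V(z)-c$ for all $z$ outside a given open neighborhood $\mathcal{O}$ of $\cS$, with $c=c(\mathcal{O})>0$; integrating against $\mu^{*}$ and using invariance $\mu^{*}P=\mu^{*}$ forces $\mu^{*}(\cZ\setminus\mathcal{O})=0$ for every such $\mathcal{O}$, giving the representation $\mu^{*}=\sum_{s\in\cS}\pi_s^{*}\delta_s$.

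For part~(b), I would identify $\pi^{*}$ through a single-tremble analysis. For small $\lambda$, a trajectory starting near $s\in\cS$ stays in a neighborhood of $s$ for an expected time of order $1/\lambda$, and the dominant exit event is that exactly one player trembles (probability $n\lambda+O(\lambda^{2})$, whereas two or more simultaneous trembles contribute only $O(\lambda^{2})$). Conditionally on such a single tremble, the chain first applies $Q$ and then runs under $P$ until it is absorbed into a neighborhood of some $s'\in\cS$, so the induced transition probability is $\hat P_{ss'}=\lim_{t\to\infty}QP^{t}(s,\mathcal{N}_{\delta}(s'))$ up to $O(\lambda)$ corrections. Substituting this expansion into $\mu_{\lambda}P_{\lambda}=\mu_{\lambda}$, dividing by $\lambda$ and passing to the limit, the leading-order balance gives $\pi^{*}\hat{P}=\pi^{*}$. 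Uniqueness of $\pi$ then follows once $\hat{P}$ is shown to be irreducible on $\cS$: from any $s$, a well-chosen tremble of the appropriate player followed by the unperturbed drift reaches a neighborhood of any prescribed $s'$, so $\hat P$ has a single recurrent class. Independence of $\pi$ from the subsequence promotes the subsequential limit to weak convergence of the full family $\{\mu_{\lambda}\}$.

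The most delicate step I foresee is the drift estimate in part~(a). With constant step size $\epsilon$, the mean dynamics of the unperturbed chain coincides with the replicator equation, whose phase portrait in non-potential games can contain interior rest points, heteroclinic connections and limit cycles, so excluding mass of $\mu^{*}$ at such invariant sets cannot rely on the deterministic flow alone. The argument must exploit the persistent discrete-time stochastic spread -- non-degenerate thanks to the positive-utility lower bound on the sampled increment -- together with the geometric trap that $\cS$ exerts once the strategy vectors approach a vertex. Producing a clean, uniformly negative drift inequality on the complement of an arbitrarily small neighborhood of $\cS$ is likely to be the main technical investment of Section~\ref{sec:TechnicalDerivation}.
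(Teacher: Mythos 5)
Your overall architecture (extract weak limits of $\mu_\lambda$, show they are supported on $\cS$, pin down the weights via a one-tremble chain on $\cS$) matches the paper's, but the step you yourself flag as the delicate one --- a Foster--Lyapunov drift forcing $\mathrm{supp}(\mu^{*})\subseteq\cS$ --- does not go through with the function you propose, and the paper in fact uses no drift argument at all. Concretely, take a player $i$ with two actions, all other players sitting at vertices (so $\alpha_{-i}$ is deterministic), and write $p=x_{i1}$, $a=\reward{i}{1,\alpha_{-i}}$, $b=\reward{i}{2,\alpha_{-i}}$. A direct computation with the unperturbed update (\ref{eq:ReinforcementLearningModel}) gives
\[
\Exp\bigl[p'(1-p')\bigr]=p(1-p)\bigl(1+\epsilon(b-a)(2p-1)-\epsilon^{2}\bigl(a^{2}(1-p)+b^{2}p\bigr)\bigr),
\]
so for $b>a$ and $p>\tfrac12$ (e.g.\ $a=1$, $b=2$, $p=0.9$, $\epsilon=0.1$ yields the factor $1.043$) the expected value of your $V=\sum_{i,j}x_{ij}(1-x_{ij})$ strictly \emph{increases}, at a point bounded away from $\cS$. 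This is not an artifact of the choice of numbers: the mean flow is replicator-like and pushes $p$ toward the better action, i.e.\ toward $p=\tfrac12$ from above, exactly where $p(1-p)$ grows. Hence no inequality $PV\le V-c$ holds off a neighborhood of $\cS$, and a Lyapunov route would require a genuinely different, nonobvious supermartingale.

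The paper's replacement for this step is pathwise rather than drift-based. Proposition~\ref{Pr:ConvergenceToPSS}(a) shows that from any state the probability of repeating the \emph{current} action profile forever is bounded away from zero: if $\alpha$ is repeated then $1-x_{i\alpha_i}(t)=(1-\epsilon u_i(\alpha))^{t}(1-x_{i\alpha_i}(0))$ decays geometrically, so the infinite product $\prod_t\prod_i x_{i\alpha_i}(t)$ is positive; part (b) then upgrades this by a conditional Borel--Cantelli argument to a.s.\ eventual constancy of the action profile, hence a.s.\ absorption into $\cS$. This yields $P^{t}\Rightarrow\Pi$ with $\Pi(z,\cdot)$ supported on $\cS$ (Proposition~\ref{Pr:LimitingUnperturbedTPF}). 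For the identification of the weights, the paper does not expand $\mu_\lambda P_\lambda=\mu_\lambda$ in powers of $\lambda$ (which is hard to justify on a continuous state space) but uses the exact identity $\mu_\lambda=\mu_\lambda P_\lambda^{L}$ for the lifted kernel $P_\lambda^{L}=Q_\lambda R_\lambda$, with $R_\lambda$ the resolvent, together with $\|P_\lambda^{L}f-Q\Pi f\|_\infty\to0$; this makes your ``tremble once, then run the unperturbed chain to absorption'' heuristic exact and shows every weak limit point is $Q\Pi$-invariant. Finally, note that a \emph{single} tremble only reaches, with positive probability, profiles differing from $\alpha$ in one coordinate --- and even that uses Proposition~\ref{Pr:ConvergenceToPSS}(a) to lock in the new profile, not the mean drift, which need not point toward an arbitrary target vertex --- so irreducibility of $\hat P$ requires the induction $(Q\Pi)^{n}(\alpha,\alpha')>0$ rather than one well-chosen tremble. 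Your part (b) is therefore morally right, but it rests entirely on a part (a) that, as written, is unproved.
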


The proof of Theorem~\ref{Th:StochasticStability} requires a series of propositions and will be presented in detail in Section~\ref{sec:TechnicalDerivation}. 

\subsection{Discussion}

Theorem~\ref{Th:StochasticStability} establishes an important observation. That is, the ``\textit{equivalence}'' (in a weak convergence sense) of the original (perturbed) learning process with a simplified process, where \emph{agents simultaneously tremble at the first iteration and then they do not tremble}. This form of simplification of the dynamics has originally been exploited to analyze \emph{aspiration learning} dynamics in \cite{ChasparisAriShamma13_SIAM}, and it is based upon the fact that \emph{under the unperturbed dynamics, agents' strategies will eventually converge to a pure strategy profile}.

Furthermore, the limiting behavior of the original (perturbed) dynamics can be characterized by the (\emph{unique}) invariant distribution of a finite-state Markov chain $\{P_{ss'}\}$, whose states correspond to the pure-strategy states of the game. In other words, \emph{we should expect that as the perturbation parameter $\lambda$ approaches zero, the algorithm spends the majority of the time on pure strategy profiles}. The importance of this result lies on the fact that no constraints have been imposed in the payoff matrix of the game other than the Positive-Utility Property \ref{P:PositiveUtilityProperty}. Thus, it extends to games beyond the fine set of potential games.

This convergence result can further be augmented with an ODE analysis for stochastic approximations to exclude convergence to pure strategies that are not Nash equilibria (as derived in \cite{ChasparisShammaRantzer15} for the case of diminishing step size). Due to space limitations this analysis is not presented in this paper, however it can be the subject of future work.

%%%%%%%%%%%%%%%%%%%%%%%%%%%%%%%%%%%%%%%%%%%%%%%%%%%%%%%%%%%%%%%%%%%%%%%%%%%%%%%%%
\section{Technical Derivation}	\label{sec:TechnicalDerivation}

% In this section, we provide the main steps for the proof of Theorem~\ref{Th:StochasticStability}.

\subsection{Unperturbed Process}	\label{Sc:UnperturbedProcess}

For $t\ge0$ define the sets
\begin{align*}
A_{t} &\df \left\{\omega\in\Omega:\alpha(\tau)=\alpha(t)\,, \text{~for all~} \tau \geq t \right\}\,,\\[5pt]
B_{t} &\df \{\omega\in\Omega:\alpha(\tau)=\alpha(0)\,, \text{~for all~} 0\le\tau\le{t}\}\,.
\end{align*}
Note that $\{B_{t}:t\ge0\}$ is a non-increasing sequence, i.e., $B_{t+1}\subseteq B_{t}$, while $\{A_{t}:t\ge0\}$ is non-decreasing, i.e., $A_{t+1}\supseteq A_t$. Let
\begin{equation*}
A_{\infty} \df\bigcup_{t=0}^{\infty}A_{t} \mbox{ and } B_{\infty}\df\bigcap_{t=1}^{\infty}B_{t}.
\end{equation*}
In other words, \emph{the set $A_{\infty}$ corresponds to the event that agents eventually play the same action profile, while $B_{\infty}$ corresponds to the event that agents never change their actions}. 

\begin{proposition}[Convergence to p.s.s.]	\label{Pr:ConvergenceToPSS}
Let us assume that the step size $\epsilon>0$ is sufficiently small such that $0<\epsilon u_i(\alpha)<1$ for all $\alpha\in\mathcal{A}$ and for all agents $i\in\mathcal{I}$. Then, the following hold: 
\begin{itemize}
\item[(a)] $\inf_{z\in\cZ}\;\mathbb{P}_{z}[B_{\infty}]>0$, 
\item[(b)] $\inf_{z\in\cZ}\Prob_{z}[A_\infty]=1$.
\end{itemize}
\end{proposition}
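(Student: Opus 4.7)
The plan is to prove part (a) by a direct recursive computation showing that, conditional on staying in $B_{t}$, the selection probability of the ``locked-in'' action $\alpha_i(0)$ concentrates geometrically fast at $1$, and then to deduce part (b) by iteratively invoking (a) at the successive times when the joint action actually changes.

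For part (a), I would fix an initial state $z=(\alpha(0),x(0))\in\cZ$ and set $p_i(t)\df x_{i\alpha_i(0)}(t)$. On the event $B_{t}$, every action played at times $\tau\le t-1$ equals $\alpha(0)$, so the update rule (\ref{eq:ReinforcementLearningModel}) collapses to $p_i(\tau+1)=p_i(\tau)+\epsilon\reward{i}{\alpha(0)}\bigl(1-p_i(\tau)\bigr)$, equivalently $1-p_i(\tau+1)=r_i(1-p_i(\tau))$ with $r_i\df 1-\epsilon\reward{i}{\alpha(0)}\in(0,1)$ by the step-size hypothesis together with Property~\ref{P:PositiveUtilityProperty}. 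Iterating yields $1-p_i(t)=r_i^{\,t}(1-p_i(0))\le r_i^{\,t}$ on $B_{t}$. Since the one-step conditional probability of passing from $B_{t}$ to $B_{t+1}$ is $\prod_{i\in\cI}p_i(t)$, Kolmogorov's extension gives $\Prob_{z}[B_{\infty}]=\prod_{t=0}^{\infty}\prod_{i\in\cI}p_i(t)$; the series $\sum_{t}(1-p_i(t))$ is dominated by a convergent geometric series, so this infinite product converges to a strictly positive limit. For a lower bound uniform in $z$, I would note that after one successful iteration $p_i(1)\ge\epsilon\reward{i}{\alpha(0)}\ge\epsilon\min_{j,\beta}\reward{j}{\beta}>0$ regardless of $p_i(0)$, so the tail product from $t=1$ admits a positive bound independent of the starting state.

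For part (b), I would introduce the stopping times $\tau_{0}\df 0$ and, for $k\ge 1$, $\tau_{k}\df\inf\{t>\tau_{k-1}:\alpha(t)\neq\alpha(\tau_{k-1})\}$, so that $A_{\infty}^{c}=\bigcap_{k}\{\tau_{k}<\infty\}$. The strong Markov property of the unperturbed chain together with part (a) gives that, on $\{\tau_{k}<\infty\}$ and conditionally on $\sF_{\tau_{k}}$, the probability that $\alpha$ remains constant from $\tau_{k}$ onwards equals $\Prob_{Z(\tau_{k})}[B_{\infty}]$, which is bounded below by the uniform constant $c>0$ of (a). Hence $\Prob_{z}[\tau_{k+1}<\infty\mid\tau_{k}<\infty]\le 1-c$, and an induction yields $\Prob_{z}[\tau_{k}<\infty]\le(1-c)^{k}\to 0$, so $\Prob_{z}[A_{\infty}]=1$ uniformly in $z$.

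The main obstacle is the uniform lower bound in (a): the infimum ranges over every $z\in\cZ$, including boundary states where $p_i(0)$ may vanish for some $i$, so one has to either restrict the analysis to the sub-chain starting one step after the initial time (via the bound $p_i(1)\ge\epsilon\min_{j,\beta}\reward{j}{\beta}$) or absorb the potentially degenerate first factor $\prod_{i}p_i(0)$ into the one-shot transition. Once that uniform lower bound is in place, part (b) follows by the standard strong-Markov / geometric-decay reduction sketched above.
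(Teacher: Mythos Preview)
Your argument for (a) is the same as the paper's in substance---both compute $1-p_i(t)=r_i^{\,t}(1-p_i(0))$ on $B_t$ and deduce positivity of $\prod_{t,i}p_i(t)$ from the summability of $1-p_i(t)$---but your route is more direct: the paper converts to $\sum_t\log p_i(t)>-\infty$, applies the Limit Comparison Test, and then invokes Raabe's criterion, whereas you simply recognise the series as geometric. You are also right that the uniform infimum over \emph{all} of $\cZ$ is delicate (indeed $\Prob_z[B_\infty]=0$ whenever $x_{i\alpha_i(0)}(0)=0$); the paper silently restricts to $x_{i\alpha_i}(0)>0$, while your observation that one successful step forces $p_i\ge\epsilon\min_{j,\beta}u_j(\beta)$ is precisely the repair needed for the downstream use in (b).

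For (b) the two proofs genuinely differ. The paper introduces the event $C_t=\{\exists\,\alpha'\neq\alpha(t):x_{i\alpha_i'}(t)>0\ \forall i\}\supseteq A_t^c$, bounds $\Prob_z[A_{t+1}\mid A_t^c]$ from below by a product of the type analysed in (a), and then appeals to a conditional (``second'') Borel--Cantelli lemma; note that this lower bound still carries the state-dependent factor $\prod_i x_{i\alpha_i'}(t)$, whose uniform positivity the paper does not make explicit. Your stopping-time reduction---$\tau_k$ the successive change times, strong Markov at $\tau_k$, and the uniform constant $c$ from (a) giving $\Prob_z[\tau_{k+1}<\infty\mid\tau_k<\infty]\le1-c$---is the cleaner and more standard packaging of the same underlying idea (each fresh action profile has a uniformly positive chance of locking in forever), and it makes the role of the uniform bound from (a) transparent rather than buried inside a Borel--Cantelli invocation.
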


The first statement of Proposition~\ref{Pr:ConvergenceToPSS} states that \emph{the probability that agents never change their actions is bounded away from zero}, while the second statement states that \emph{the probability that eventually agents play the same action profile is one}.
\begin{proof}
(a) Let us consider an action profile $\alpha=(\alpha_1,...,\alpha_n)\in\mathcal{A}$, and an initial strategy profile $x(0)=(x_1(0),...,x_n(0))$ such that $x_{i\alpha_i}(0)>0$ for all $i\in\mathcal{I}$. Note that if the same action profile $\alpha$ is selected up to time $t$, then the strategy of agent $i$ satisfies:
\begin{equation}	\label{eq:ConvergenceToPSS:AccumulatedStrategy}
x_{i}(t) = e_{\alpha_i} - (1-\epsilon u_i(\alpha))^{t}(e_{\alpha_i}-x_i(0)).
\end{equation}
Given that $B_t$ is non-increasing, from continuity from above we have
\begin{equation*}
\Prob_{z}[B_{\infty}] = \lim_{t\to\infty}\Prob_{z}[B_t] = \lim_{t\to\infty}\prod_{k=0}^{t}\prod_{i=1}^{n}x_{i\alpha_i}(k).
\end{equation*}
Note that $\Prob[B_{\infty}] > 0$ if and only if 
\begin{equation}	\label{eq:ConvergenceToPSS:Condition1}
\sum_{t=1}^{\infty}\log(x_{i\alpha_i}(t)) > -\infty.
\end{equation}
Let us introduce the variable $$y_i(t) \df 1-x_{i\alpha_i}(t) = \sum_{j\in\mathcal{A}_i\backslash\alpha_i}x_{ij}(t),$$ which corresponds to the probability of agent $i$ selecting any action other than $\alpha_i$. Condition (\ref{eq:ConvergenceToPSS:Condition1}) is equivalent to
\begin{equation}	\label{eq:ConvergenceToPSS:Condition2}
-\sum_{t=0}^{\infty}\log(1-y_i(t)) < \infty,	\mbox{ for all } i\in\mathcal{I}.
\end{equation}
We also have that 
\begin{equation*}
\lim_{t\to\infty}\frac{-\log(1-y_i(t))}{y_i(t)} = \lim_{t\to\infty}\frac{1}{1-y_i(t)} > \rho
\end{equation*}
for some $\rho>0$, since $0\leq y_i(t) \leq 1$. Thus, from the Limit Comparison Test, we conclude that condition (\ref{eq:ConvergenceToPSS:Condition2}) holds if and only if $\sum_{t=1}^{\infty}y_i(t) < \infty$, for each $i\in\mathcal{I}$.

Lastly, note that $y_{i}(t+1)/y_i(t) = 1-\epsilon u_i(\alpha)$. By Raabe's criterion, the series $\sum_{t=0}^{\infty}y_i(t)$ is convergent if $\lim_{t\to\infty}t\left(\nicefrac{y_i(t)}{y_i(t+1)} - 1 \right) > 1.$
We have
\begin{eqnarray*}
t\left(\frac{y_i(t)}{y_i(t+1)}-1\right) = t\frac{\epsilon u_i(\alpha)}{1-\epsilon u_i(\alpha)}.
\end{eqnarray*}
Thus, if $\epsilon u_i(\alpha)<1$ for all $\alpha\in\mathcal{A}$ and $i\in\mathcal{I}$, then $1-\epsilon u_i(\alpha)>0$ and 
%\begin{equation*}
$\lim_{t\to\infty}t(\nicefrac{\epsilon u_i(\alpha)}{1-\epsilon u_i(\alpha)}) > 1,$
%\end{equation*}
which implies that the series $\sum_{t=1}^{\infty}y_i(t)$ is convergent. Thus, we conclude that $\Prob_z[B_{\infty}]>0$.

(b) Define the event
\begin{equation*}
C_t \df \left\{\exists\alpha'\neq\alpha(t):x_{i\alpha_i'}(t)>0, \mbox{ for all } i\in\mathcal{I}\right\},
\end{equation*}
i.e., $C_t$ corresponds to the event that there exists an action profile different from the current action profile for which the nominal strategy assigns positive probability for all agents $i$. Note that $A_t^c \subseteq C_t$, since $A_t^c$ occurs only if there is some action profile $\alpha'\neq\alpha(t)$ for which the nominal strategy assigns positive probability. This further implies that $\Prob_{z}[A_{t}^{c}]\leq\Prob_{z}[C_t]$. Then, we have:
\begin{eqnarray*}
\lefteqn{\Prob_{z}[A_{t+1}|A_t^{c}]} \cr & = & \frac{\Prob_{z}[A_{t+1}\cap A_{t}^{c}]}{\Prob_z[A_t^c]} \cr & \geq & \frac{\Prob_z[A_{t+1}\cap A_t^c]}{\Prob_z[C_t]} \cr
% \geq \frac{\Prob_z[A_{t+1}\cap A_t^c\cap C_t]}{\Prob_z[C_t]} 
%\cr
& \geq & \Prob_z[A_{t+1}\cap A_{t}^{c}|C_t] \cr
& = & \Prob_z[\{\alpha(\tau)=\alpha'\neq\alpha(t),\forall \tau>t\}|C_t] \cr
& \geq & \inf_{\alpha'\neq{\alpha}}\prod_{i=1}^{n}x_{i\alpha_i'}(t)\prod_{k=t+1}^{\infty}\left\{1-(1-\epsilon\reward{i}{\alpha'})^{k-t-1}c_i(\alpha')\right\} \cr
& \geq & \inf_{\alpha'\neq\alpha}\prod_{i=1}^{n}x_{i\alpha_i'}(t)\prod_{k=0}^{\infty}\left\{1-(1-\epsilon\reward{i}{\alpha'})^{k}c_i(\alpha')\right\}
\end{eqnarray*}
where $c_i(\alpha')\df 1 - x_{i\alpha_i'}(t) \geq 0$. We have already shown in part (a) that the second part of the r.h.s. is bounded away from zero. Therefore, we conclude that $\Prob_z[A_{t+1}|A_t^{c}]>0$. Thus, from the counterpart of the Borel-Cantelli Lemma, $\Prob_z\left[A_{\infty}\right]=1.$ \QED
\end{proof}

The above proposition is rather useful in characterizing the support of any invariant measure of the unperturbed process, as the following proposition shows.

\begin{proposition}[Limiting t.p.f. of unperturbed process]	\label{Pr:LimitingUnperturbedTPF}
Let $\mu$ denote an invariant probability measure of $P$. Then, there exists a t.p.f. $\Pi$ on $\cZ\times\Bor(\cZ)$ such that 
\begin{itemize}
\item[(a)] for $\mu$-a.e. $z\in\cZ$, $\Pi(z,\cdot)$ is an invariant probability measure for $P$;
\item[(b)] for all $f\in\Cc(\cZ)$, $\lim_{t\to\infty}\|P^tf-\Pi f\|_{\infty}=0$;
\item[(c)] $\mu$ is an invariant probability measure of $\Pi$;
\item[(d)] the support\footnote{The \emph{support} of a measure $\mu$ on $\cZ$ is the unique closed set $F\subset\Bor(\cZ)$ such that $\mu(\cZ\backslash{F})=0$ and $\mu(F\cap{O})>0$ for every open set $O\subset\cZ$ such that $F\cap{O}\neq\varnothing$.} of $\Pi$ is on $\mathcal{S}$ for all $z\in\cZ$.
\end{itemize}
\end{proposition}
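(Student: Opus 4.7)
The plan is to construct $\Pi$ as the transition kernel that sends $z$ to the law of the (almost surely existing) terminal state of the unperturbed chain. Proposition~\ref{Pr:ConvergenceToPSS}(b) gives $\Prob_z[A_\infty]=1$ for every $z\in\cZ$, so the action sequence eventually stabilises at some random $\alpha_\infty$; after stabilisation the recursion (\ref{eq:ReinforcementLearningModel}) reduces to the affine deterministic iteration $x_i(t+1)=(1-\epsilon u_i(\alpha_\infty))\,x_i(t)+\epsilon u_i(\alpha_\infty)\,e_{\alpha_{\infty,i}}$, which drives $x(t)$ to $e_{\alpha_\infty}$ geometrically. Hence $Z_t\to Z_\infty\df(\alpha_\infty,e_{\alpha_\infty})\in\mathcal{S}$ holds $\Prob_z$-a.s., and I set $\Pi(z,B)\df\Prob_z[Z_\infty\in B]$ for $B\in\Bor(\cZ)$. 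Part (d) is then immediate, since the terminal state lies in $\mathcal{S}$ for \emph{every} $z$.

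For (a), I would observe that every $s=(\alpha,e_\alpha)\in\mathcal{S}$ is absorbing for $P$: at $s$, the action update (\ref{eq:ActionUpdate}) picks $\alpha_i(t+1)=\alpha_i$ with probability one (because $x_{i\alpha_i}=1$), and then (\ref{eq:ReinforcementLearningModel}) preserves $x_i=e_{\alpha_i}$; thus $P(s,\{s\})=1$ and each $\delta_s$ is $P$-invariant. Writing $\Pi(z,\cdot)=\sum_{s\in\mathcal{S}}p_s(z)\,\delta_s$ with $p_s(z)\df\Prob_z[Z_\infty=s]$ realises $\Pi(z,\cdot)$ as a convex combination of $P$-invariant probability measures, hence itself $P$-invariant, and in fact for every $z\in\cZ$ rather than only $\mu$-a.e.

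The main obstacle is (b). Pointwise convergence $P^tf(z)=\Exp_z[f(Z_t)]\to\Exp_z[f(Z_\infty)]=\Pi f(z)$ is immediate from the a.s.\ convergence of $Z_t$, boundedness of $f\in\Cc(\cZ)$, and dominated convergence (using that $\cZ$ is compact). To upgrade to sup-norm convergence I split, for any $t_0\leq t$,
\begin{equation*}
\babs{P^tf(z)-\Pi f(z)}
\;\leq\;\Exp_z\bigl[\babs{f(Z_t)-f(Z_\infty)}\,\mathbb{I}_{\{\tau\leq t_0\}}\bigr]
\;+\;2\,\|f\|_\infty\,\Prob_z[\tau>t_0],
\end{equation*}
where $\tau\df\inf\{t\ge 0:A_t\text{ holds}\}$ is the action-stabilisation time, which satisfies $\{\tau>t\}=A_t^c$ by the monotonicity of $A_t$ noted in Section~\ref{Sc:UnperturbedProcess}. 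On $\{\tau\leq t_0\}$, the closed-form estimate $|x(t)-e_{\alpha_\infty}|\leq C\,r^{\,t-t_0}$ with $r\df\max_{i,\alpha}(1-\epsilon u_i(\alpha))<1$, together with uniform continuity of $f$ on the compact $\cZ$, forces the first term to vanish uniformly in $z$ as $t-t_0\to\infty$. The remaining step is to show $\sup_{z\in\cZ}\Prob_z[A_t^c]\to 0$: here the plan is to iterate the inductive lower bound $\Prob_z[A_{t+1}\mid A_t^c]\geq q_z>0$ appearing in the proof of Proposition~\ref{Pr:ConvergenceToPSS}(b), combined with the uniform lower bound on $\Prob_z[B_\infty]$ supplied by Proposition~\ref{Pr:ConvergenceToPSS}(a), to extract a geometric tail bound that is uniform in $z$; as a fallback I would establish equicontinuity of $\{P^tf\}_{t\ge 0}$ on compact $\cZ$ and invoke Arzel\`a--Ascoli with the pointwise limit. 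This upgrade from pointwise to uniform decay is the delicate step I expect to dominate the technical work.

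Part (c) then follows by duality: $\mu P=\mu$ yields $\mu[P^tf]=\mu[f]$ for every $f\in\Cc(\cZ)$ and every $t\ge 0$, and the uniform limit furnished by (b) permits passing $t\to\infty$ inside the integral to obtain $\mu[\Pi f]=\mu[f]$, i.e., $\mu\Pi=\mu$.
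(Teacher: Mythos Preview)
Your route differs substantially from the paper's. For (a)--(c) the paper does not construct $\Pi$ at all: it simply invokes \cite[Theorem~5.2.2(a),(b),(e)]{Lerma03}, using that $\cZ$ is locally compact separable and $P$ is weak Feller with an invariant probability measure. Part (d) is then argued by contradiction via Portmanteau and Proposition~\ref{Pr:ConvergenceToPSS}(b). Your explicit construction of $\Pi$ as the law of the a.s.\ terminal state is more transparent and hands you (d) and (a) essentially for free, whereas the paper gets (a)--(c) for free and must argue (d) separately. Each approach front-loads the work in a different place.

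The substantive gap in your proposal is the uniform upgrade in (b). Your plan is to show $\sup_{z}\Prob_z[\tau>t_0]\to 0$ by iterating the lower bound $\Prob_z[A_{t+1}\mid A_t^c]\geq q_z$ from the proof of Proposition~\ref{Pr:ConvergenceToPSS}(b). But inspect that bound: it contains the factor $\inf_{\alpha'\neq\alpha}\prod_i x_{i\alpha_i'}(t)$, which is \emph{not} bounded away from zero over $\cZ$ (it vanishes near faces of $\mathbf{\Delta}$), so you cannot directly extract a uniform geometric rate from it. Likewise, Proposition~\ref{Pr:ConvergenceToPSS}(a) as written assumes $x_{i\alpha_i}(0)>0$, and the resulting infinite-product bound degenerates as $x_{i\alpha_i}(0)\downarrow 0$; some additional argument (e.g., first spending one step to land in a region where the current action has strategy mass at least $\epsilon\min_\alpha u_i(\alpha)$, which the recursion guarantees) is needed before uniformity is available. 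Your Arzel\`a--Ascoli fallback requires equicontinuity of $\{P^tf\}_{t}$, which the weak Feller property alone does not furnish; you would need a coupling or a strong Feller--type estimate, neither of which is established in the paper. None of this is fatal, but it is real work that the paper sidesteps entirely by citing the ergodic theorem in \cite{Lerma03}.
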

\begin{proof}
The state space $\cZ$ is a locally compact separable metric space and the t.p.f. of the unperturbed process $P$ admits an invariant probability measure due to Proposition~\ref{Pr:WeakFeller}. Thus, statements (a), (b) and (c) follow directly from \cite[Theorem~5.2.2 (a), (b), (e)]{Lerma03}. 

(d) Let us assume that the support of $\Pi$ includes points in $\cZ$ other than the pure strategy states. Let also $O\subset\cZ$ be an open set such that $O\cap\cS=\varnothing$ and $\Pi(z^*,O)>0$ for some $z^*\in\cZ$. Given that $P^{t}$ converges weakly to $\Pi$ as $t\to\infty$, from Portmanteau theorem (cf.,~\cite[Theorem~1.4.16]{Lerma03}), we have that $$\liminf_{t\to\infty} P^{t}(z^*,O) \geq \Pi(z^*,O)>0.$$ This is a contradiction of Proposition~\ref{Pr:ConvergenceToPSS}(b). Thus, the conclusion follows. \QED
%
%Since $P^t$ is a (weak) Feller Markov chain, by \cite[Proposition~7.2.1]{Lerma03}, we have that for every sequence $\{z_{k}\}$ such that $z_{k}\to z^*$, we have that $$\liminf_{k\to\infty}P^t(z_n,O) \geq P^t(z^*,O).$$ 
%Under the event $B_{\infty}$, i.e., when the actions of all agents remain constant and equal to $\alpha=(\alpha_1,...,\alpha_n)$, then the nominal strategy $x_i$ evolves as follows:
%\begin{equation}
%x_i(t) = e_{\alpha_i} - (1-\epsilon \reward{i}{\alpha})^{t}(e_{\alpha_i}-x_i(0)) \df \chi(t;\alpha,x).
%\end{equation}
%The proof follows the exact same steps with Proposition~3.4 in \cite{ChasparisAriShamma13_SIAM}.
\end{proof}

Proposition~\ref{Pr:LimitingUnperturbedTPF} states that the limiting unperturbed t.p.f. converges weakly to a t.p.f. $\Pi$ which accepts the same invariant p.m. as $P$. Furthermore, \emph{the support of $\Pi$ is the set of pure strategy states $\mathcal{S}$}. This is a rather important observation, since the limiting perturbed process can also be ``related'' (in a weak-convergence sense) to the t.p.f. $\Pi$, as it will be shown in the following section.

\subsection{Decomposition of perturbed t.p.f.}		\label{sec:DecompositionPerturbedTPF}

We can decompose the t.p.f. of the perturbed process as follows:
$$P_{\lambda} = (1-\varphi(\lambda))P + \varphi(\lambda)Q_{\lambda}$$ where $\varphi(\lambda) = 1-(1-\lambda)^{n}$ is the probability that at least one agent trembles (since $(1-\lambda)^{n}$ is the probability that no agent trembles), and $Q_{\lambda}$ corresponds to the t.p.f. induced by the one-step reinforcement-learning update when at least one agent trembles. Note that $\varphi(\lambda)\to{0}$ as $\lambda\to{0}$. 

Define also $Q$ to be the t.p.f. when \emph{only one} players trembles, and $Q^*$ is the t.p.f. where at least two players tremble. Then, we may write:
\begin{equation}
Q_{\lambda} = (1-\psi(\lambda))Q + \psi(\lambda)Q^*,
\end{equation}
where $\psi(\lambda) \df 1-\frac{n\lambda}{1-(1-\lambda)^{n}}$ corresponds to the probability that at least two players tremble given that at least one player trembles. 

Let us also define the infinite-step t.p.f. when trembling only at the first step (briefly, \emph{lifted} t.p.f.) as follows: 
\begin{equation}
P_{\lambda}^{L} \df \varphi(\lambda)\sum_{t=0}^{\infty}(1-\varphi(\lambda))^{t}Q_{\lambda}P^{t} = Q_{\lambda}R_{\lambda}
\end{equation}
where
%\begin{equation}
$R_{\lambda} \df \varphi(\lambda)\sum_{t=0}^{\infty}(1-\varphi(\lambda))^{t}P^{t},$
%\end{equation}
i.e., $R_{\lambda}$ corresponds to the \emph{resolvent} t.p.f.

\begin{proposition}[Invariant p.m. of perturbed process]		\label{Pr:WeakLimitPointsOfPerturbedInvariantMeasures}
The following hold:
\begin{itemize}
\item[(a)] For $f\in\Cc(\cZ)$, $\lim_{\lambda\to{0}}\|R_{\lambda}f-\Pi{f}\|_{\infty} = 0.$
\item[(b)] For $f\in\Cc(\cZ)$, $\lim_{\lambda\to{0}}\|P_{\lambda}^{L}f-Q\Pi{f}\|_{\infty} = 0$.
\item[(c)] Any invariant distribution $\mu_{\lambda}$ of $P_\lambda$ is also an invariant distribution of $P_{\lambda}^{L}$.
\item[(d)] Any weak limit point in $\mathfrak{P}(\cZ)$ of $\mu_{\lambda}$, as $\lambda\to{0}$, is an invariant probability measure of $Q\Pi$.
\end{itemize}
\end{proposition}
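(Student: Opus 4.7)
The plan is to handle the four parts in order, with (a) and (c) as independent building blocks, (b) a direct consequence of (a), and (d) combining (b) with the invariance identity from (c). For (a), I would exploit that the resolvent's geometric weights $\varphi(\lambda)(1-\varphi(\lambda))^{t}$ sum to $1$, so that
\[
R_{\lambda}f - \Pi f \;=\; \varphi(\lambda)\sum_{t=0}^{\infty}(1-\varphi(\lambda))^{t}\bigl(P^{t}f - \Pi f\bigr).
\]
Given $\varepsilon>0$, pick $T$ so that $\|P^{t}f-\Pi f\|_{\infty}<\varepsilon$ for $t\ge T$ by Proposition~\ref{Pr:LimitingUnperturbedTPF}(b), and split the sum at $T$. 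The tail contributes at most $\varepsilon$, and the head has total weight bounded by $T\varphi(\lambda)\to 0$ as $\lambda\to 0$, which (together with the trivial bound $\|P^{t}f-\Pi f\|_{\infty}\le 2\|f\|_{\infty}$) makes the head vanish in $\|\cdot\|_{\infty}$ for small $\lambda$.

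For (b), I would insert $Q_{\lambda}\Pi f$ and use the triangle inequality
\[
\bnorm{Q_{\lambda}R_{\lambda}f - Q\Pi f}_{\infty} \;\le\; \bnorm{Q_{\lambda}(R_{\lambda}f-\Pi f)}_{\infty} + \bnorm{(Q_{\lambda}-Q)\Pi f}_{\infty},
\]
with the first term dominated by $\|R_{\lambda}f-\Pi f\|_{\infty}\to 0$ by (a), and the second equal to $\psi(\lambda)\|(Q^{\ast}-Q)\Pi f\|_{\infty}$, which vanishes since $1-(1-\lambda)^{n}=n\lambda+o(\lambda)$ forces $\psi(\lambda)\to 0$. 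For (c), the invariance $\mu_{\lambda}P_{\lambda}=\mu_{\lambda}$ combined with $P_{\lambda}=(1-\varphi(\lambda))P+\varphi(\lambda)Q_{\lambda}$ yields
\[
\varphi(\lambda)\,\mu_{\lambda}Q_{\lambda} \;=\; \mu_{\lambda} - (1-\varphi(\lambda))\,\mu_{\lambda}P.
\]
Substituting this into $\mu_{\lambda}P_{\lambda}^{L}=\varphi(\lambda)\sum_{t\ge 0}(1-\varphi(\lambda))^{t}(\mu_{\lambda}Q_{\lambda})P^{t}$ produces a telescoping series whose partial sums collapse to $\mu_{\lambda}P^{0}=\mu_{\lambda}$.

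Finally, for (d), suppose $\mu_{\lambda_{k}}\to\mu^{\ast}$ weakly along some sequence $\lambda_{k}\downarrow 0$. By (c), $\mu_{\lambda_{k}}[f]=\mu_{\lambda_{k}}[P_{\lambda_{k}}^{L}f]$ for every $f\in\Cc(\cZ)$; the left side tends to $\mu^{\ast}[f]$ by weak convergence. For the right side I would write
\[
\babs{\mu_{\lambda_{k}}[P_{\lambda_{k}}^{L}f] - \mu^{\ast}[Q\Pi f]} \;\le\; \bnorm{P_{\lambda_{k}}^{L}f - Q\Pi f}_{\infty} + \babs{\mu_{\lambda_{k}}[Q\Pi f] - \mu^{\ast}[Q\Pi f]},
\]
where the first term vanishes by (b) and the second by weak convergence \emph{provided} $Q\Pi f\in\Cc(\cZ)$. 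This continuity requirement is the main obstacle, and I would address it by observing that $P_{\lambda}^{L}f\in\Cc(\cZ)$ for every $\lambda>0$ (since Proposition~\ref{Pr:WeakFeller} and the uniformly dominated geometric series make $R_{\lambda}$ map $\Cc(\cZ)$ into itself, and $Q_{\lambda}$ inherits the weak Feller property from the same argument); then $Q\Pi f$, being a uniform limit of continuous bounded functions by (b), is itself in $\Cc(\cZ)$. Thus $\mu^{\ast}[f]=\mu^{\ast}[Q\Pi f]$ for every $f\in\Cc(\cZ)$, i.e., $\mu^{\ast}$ is invariant for $Q\Pi$.
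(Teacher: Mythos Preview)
Your proposal is correct and follows essentially the paper's own proof: the same resolvent identity and split-at-$T$ argument for (a), the same triangle inequality for (b), an equivalent telescoping version of the paper's operator identity $P_{\lambda}R_{\lambda}=R_{\lambda}-\varphi(\lambda)I+\varphi(\lambda)P_{\lambda}^{L}$ for (c), and the same three-term decomposition for (d). Your explicit verification that $Q\Pi f\in\Cc(\cZ)$ (as a uniform limit of the continuous functions $P_{\lambda}^{L}f$) is a detail the paper glosses over but genuinely needs for the weak-convergence step $\mu_{\lambda}[Q\Pi f]\to\hat{\mu}[Q\Pi f]$.
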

\begin{proof}
% (sketch) The proof follows similar steps with \cite[Proposition~3.5]{ChasparisAriShamma13_SIAM}. \QED
(a) For any $f\in C_b(\cZ)$, we have
\begin{eqnarray*}
\lefteqn{\|R_{\lambda}f - \Pi{f}\|_{\infty}} \cr 
& = & \supnorm{\varphi(\lambda)\sum_{t=0}^{\infty}(1-\varphi(\lambda))^tP^tf - \Pi f}  \cr 
& = & \supnorm{\varphi(\lambda)\sum_{t=0}^{\infty}(1-\varphi(\lambda))^t(P^t f - \Pi f)} \cr
%& \leq & \varphi(\lambda)\sum_{t=0}^{T-1}(1-\varphi(\lambda))^{t}\|P^tf-\Pi{f}\|_{\infty} + \cr && (1-\varphi(\lambda))^{T}\sup_{t\geq{T}}\|P^tf-\Pi{f}\|_{\infty}.
\end{eqnarray*}
where we have used the property $\varphi(\lambda)\sum_{t=0}^{\infty}(1-\varphi(\lambda))^t=1$. Note that
\begin{eqnarray*}
\lefteqn{\varphi(\lambda)\sum_{t=T}^{\infty}(1-\varphi(\lambda))^t\supnorm{P^tf-\Pi f} } \cr
& \leq & (1-\varphi(\lambda))^{T}\sup_{t\geq{T}}\supnorm{P^tf - \Pi f}. 
\end{eqnarray*}
From Proposition~\ref{Pr:LimitingUnperturbedTPF}(b), we have that for any $\delta>0$, there exists $T=T(\delta)>0$ such that the r.h.s. is uniformly bounded by $\delta$ for all $t\geq T$. Thus, the sequence $$A_T\df\varphi(\lambda)\sum_{t=0}^{T}(1-\varphi(\lambda))^t(P^tf - \Pi f)$$ is Cauchy and therefore convergent (under the sup-norm). In other words, there exists $A\in\mathbb{R}$ such that $$\lim_{T\to\infty}\supnorm{A_{T}-A}=0.$$ 
For every $T>0$, we have
\begin{eqnarray*}
\supnorm{R_{\lambda}f-\Pi{f}} \leq \supnorm{A_{T}} + \supnorm{A - A_T}.
\end{eqnarray*}
Note that 
\begin{eqnarray*}
\supnorm{A_T} \leq \varphi(\lambda) \sum_{t=0}^{T} (1-\varphi(\lambda))^{t}  \supnorm{P^t f-\Pi f}.
\end{eqnarray*}
If we take $\lambda\downarrow{0}$, then the r.h.s. converges to zero. Thus,
\begin{equation*}
\|R_{\lambda}f-\Pi{f}\|_{\infty} \leq \supnorm{A-A_T}, \mbox{ for all } T>0,
\end{equation*}
which concludes the proof.

(b) For any $f\in C_{b}(\cZ)$, we have
\begin{eqnarray*}
\lefteqn{ \|P_{\lambda}^{L}f-Q\Pi{f}\|_{\infty} } \cr
& \leq & \|Q_{\lambda}(R_{\lambda}f-\Pi{f})\|_{\infty} + \|Q_{\lambda}\Pi{f} - Q\Pi{f}\|_{\infty} \cr
& \leq & \|R_{\lambda}f-\Pi{f}\|_{\infty} + \|Q_{\lambda}\Pi{f}-Q\Pi{f}\|_{\infty}.
\end{eqnarray*}
The first term of the r.h.s. approaches 0 as $\lambda\downarrow{0}$ according to (a). The second term of the r.h.s. also approaches 0 as $\lambda\downarrow{0}$ since $Q_{\lambda}\rightarrow{Q}$ as $\lambda\downarrow{0}$.

(c) Note that, by definition of the perturbed t.p.f. $P_{\lambda}$, we have
\begin{equation*}
P_{\lambda}R_{\lambda} = (1-\varphi(\lambda))PR_{\lambda} + \varphi(\lambda)Q_{\lambda}R_{\lambda}.
\end{equation*}
Note further that $Q_{\lambda}R_{\lambda}=P_{\lambda}^{L}$ and $$(1-\varphi(\lambda))PR_{\lambda} = R_{\lambda} - \varphi(\lambda)I,$$ where $I$ corresponds to the identity operator. Thus, we have
\begin{equation*}
P_{\lambda}R_{\lambda} = R_{\lambda}-\varphi(\lambda)I+\varphi(\lambda)P_{\lambda}^{L}.
\end{equation*}
For any invariant probability measure of $P_{\lambda}$, $\mu_{\lambda}$, we have
\begin{equation*}
\mu_{\lambda}P_{\lambda}R_{\lambda} = \mu_{\lambda}R_{\lambda}-\varphi(\lambda)\mu_{\lambda}+\varphi(\lambda)\mu_{\lambda}P_{\lambda}^{L},
\end{equation*}
which equivalently implies that 
\begin{equation*}
\mu_{\lambda} = \mu_{\lambda}P_{\lambda}^{L},
\end{equation*}
since $\mu_{\lambda}P_{\lambda} = \mu_{\lambda}$. Thus, we conclude that $\mu_{\lambda}$ is also an invariant p.m. of $P_{\lambda}^{L}$.

(d) Let $\hat{\mu}$ denote a weak limit point of $\mu_{\lambda}$ as $\lambda\downarrow{0}$. To see that such a limit exists, take $\hat{\mu}$ to be an invariant probability measure of $P$. Then, 
\begin{eqnarray*}
\lefteqn{\|P_{\lambda}f-P{f}\|_{\infty}} \cr & \geq & \|\mu_{\lambda}(P_{\lambda}f-P{f})\|_{\infty} \cr & = & \|(\mu_{\lambda}-\hat{\mu})(I-P)[f]\|_{\infty}.
\end{eqnarray*} 
Note that the weak convergence of $P_{\lambda}$ to $P$, it necessarily implies that $\mu_{\lambda}\Rightarrow\hat{\mu}$. Note further that
\begin{eqnarray*}
\lefteqn{\hat{\mu}[f] - \hat{\mu}Q\Pi{f}}\cr & = & (\hat{\mu}[f]-\mu_{\lambda}[f]) + \mu_{\lambda}[P_{\lambda}^{L}f-Q\Pi{f}]+ \cr && (\mu_{\lambda}[Q\Pi{f}]-\hat{\mu}[Q\Pi{f}]).
\end{eqnarray*}
The first and the third term of the r.h.s. approaches 0 as $\lambda\downarrow{0}$ due to the fact that $\mu_{\lambda}\Rightarrow\hat{\mu}$. The same holds for the second term of the r.h.s. due to part (b). Thus, we conclude that any weak limit point of $\mu_{\lambda}$ as $\lambda\downarrow{0}$ is an invariant p.m. of $Q\Pi$. \QED
\end{proof}

\subsection{Invariant p.m. of one-step perturbed process}	\label{sec:InvariantPMOneStepPerturbedTPF}

Define the finite-state Markov process $\hat{P}$ as in (\ref{eq:FiniteStateMarkovChain}). %, such that, for any $s,s'\in\mathcal{S}$, $$\hat{P}_{ss'}\df\lim_{t\to\infty}QP^{t}(s,\mathcal{N}_{\delta}(s')),$$ for any $\delta$ sufficiently small. 

\begin{proposition} [Unique invariant p.m. of $Q\Pi$]		\label{Pr:UniqueInvariantPMofQPi}
There exists a unique invariant probability measure $\hat{\mu}$ of $Q\Pi$. It satisfies 
\begin{equation}	\label{eq:InvariantMeasureQP_derivation}
\hat{\mu}(\cdot) = \sum_{s\in\mathcal{S}}\pi_s\delta_s(\cdot)
\end{equation}
for some constants $\pi_s\geq{0}$, $s\in\mathcal{S}$. Moreover, $\pi=(\pi_1,...,\pi_{\magn{\mathcal{S}}})$ is an invariant distribution of $\hat{P}$, i.e., $\pi=\pi\hat{P}$.
\end{proposition}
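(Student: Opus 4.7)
The plan is to argue in three steps: (i) any invariant p.m.\ $\hat{\mu}$ of $Q\Pi$ is necessarily supported on the finite set $\cS$, so it has the form $\sum_{s}\pi_{s}\delta_{s}$; (ii) the resulting weights $\pi$ form an invariant distribution of $\hat{P}$; (iii) $\hat{P}$ is irreducible on $\cS$, so its stationary distribution $\pi$ is unique, whence so is $\hat{\mu}$.

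For step (i), observe that by Proposition~\ref{Pr:LimitingUnperturbedTPF}(d) we have $\Pi(y,\cS)=1$ for every $y\in\cZ$, hence
\begin{equation*}
(Q\Pi)(z,\cS^{c}) \;=\; \int_{\cZ}Q(z,dy)\,\Pi(y,\cS^{c}) \;=\; 0\qquad\forall\,z\in\cZ.
\end{equation*}
The invariance identity $\hat{\mu}=\hat{\mu}Q\Pi$ then forces $\hat{\mu}(\cS^{c})=0$, and since $\cS$ is finite this immediately produces the representation~(\ref{eq:InvariantMeasureQP_derivation}).

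For step (ii) I would first identify $\hat{P}_{ss'}$ with $Q\Pi(s,\{s'\})$. Pick $\delta>0$ small enough that the neighborhoods $\{\mathcal{N}_{\delta}(s'')\}_{s''\in\cS}$ are pairwise disjoint, and sandwich $\Ind_{\mathcal{N}_{\delta}(s')}$ between continuous test functions $f_{\delta}^{-}\le\Ind_{\mathcal{N}_{\delta}(s')}\le f_{\delta}^{+}$ in $\Cc(\cZ)$ that both agree with $\Ind_{\{s'\}}$ on $\cS$. Because $\Pi(y,\cdot)$ is supported on $\cS$, we have $\Pi f_{\delta}^{\pm}(y)=\Pi(y,\{s'\})$, and Proposition~\ref{Pr:LimitingUnperturbedTPF}(b) supplies $\supnorm{P^{t}f_{\delta}^{\pm}-\Pi(\cdot,\{s'\})}\to 0$. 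A squeeze argument then yields uniform convergence of $P^{t}(\cdot,\mathcal{N}_{\delta}(s'))$ to $\Pi(\cdot,\{s'\})$; integrating against $Q(s,\cdot)$ (a probability measure) gives $\hat{P}_{ss'}=Q\Pi(s,\{s'\})$. Evaluating $\hat{\mu}=\hat{\mu}Q\Pi$ on the singleton $\{s'\}$ now produces $\pi_{s'}=\sum_{s\in\cS}\pi_{s}\hat{P}_{ss'}$, i.e., $\pi=\pi\hat{P}$.

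Step (iii) is the \emph{main obstacle}. I would establish irreducibility of $\hat{P}$ by exhibiting positive-probability transitions between pure strategy states differing in a single coordinate. Fix $s=(\alpha,e_{\alpha})\in\cS$ and $s'=(\alpha',e_{\alpha'})\in\cS$ with $\alpha'_{i}\neq\alpha_{i}$ and $\alpha'_{-i}=\alpha_{-i}$. Under $Q$, player $i$ trembles to $\alpha'_{i}$ with probability $\nicefrac{1}{\magn{\cA_{i}}}$, while each $j\neq i$ necessarily plays $\alpha_{j}$ since $x_{j}=e_{\alpha_{j}}$. After this step, Property~\ref{P:PositiveUtilityProperty} gives $x_{i\alpha'_{i}}=\epsilon\,\reward{i}{\alpha'_{i},\alpha_{-i}}>0$, while the components $x_{j}=e_{\alpha_{j}}$ for $j\neq i$ are absorbing under the unperturbed update. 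Applying Proposition~\ref{Pr:ConvergenceToPSS}(a) to this post-tremble state and the action profile $(\alpha'_{i},\alpha_{-i})$ ensures that with positive probability this profile is played forever thereafter, so $x_{i}(t)\to e_{\alpha'_{i}}$ and the chain converges to $s'$. Thus $\hat{P}_{ss'}>0$ for every single-coordinate neighbor; iterating at most $n$ such transitions connects any two pure strategy states. Irreducibility on the finite set $\cS$ then delivers a unique stationary distribution $\pi$, and feeding $\pi$ back through $\hat{\mu}=\sum_{s}\pi_{s}\delta_{s}$ furnishes the unique invariant probability measure of $Q\Pi$.
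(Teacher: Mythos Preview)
Your proposal is correct and follows essentially the same three-part structure as the paper's proof: support on $\cS$ via Proposition~\ref{Pr:LimitingUnperturbedTPF}(d), identification of $\hat{P}_{ss'}$ with $Q\Pi(s,\{s'\})$, and irreducibility through single-coordinate trembles combined with Proposition~\ref{Pr:ConvergenceToPSS}. The only cosmetic differences are that the paper invokes the continuity-set version of Portmanteau (since $Q\Pi(s,\partial\mathcal{N}_{\delta}(s'))=0$) where you use an explicit sandwich by continuous test functions, and the paper obtains existence of $\hat{\mu}$ by citing Proposition~\ref{Pr:WeakLimitPointsOfPerturbedInvariantMeasures}(d) rather than reconstructing it from the stationary $\pi$ of the finite chain; both routes are equivalent here.
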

\begin{proof}
% (sketch) The proof follows similar steps with \cite[Proposition~3.6]{ChasparisAriShamma13_SIAM}. \QED
From Proposition~\ref{Pr:LimitingUnperturbedTPF}(d), we know that the support of $\Pi$ is on the set of pure strategy states $\mathcal{S}$. Thus, the support of $Q\Pi$ is also on $\mathcal{S}$. From Proposition~\ref{Pr:WeakLimitPointsOfPerturbedInvariantMeasures}, we know that $Q\Pi$ admits an invariant measure, say $\hat{\mu}$, whose support is also $\mathcal{S}$. Thus, $\hat{\mu}$ admits the form of (\ref{eq:InvariantMeasureQP_derivation}), for some constants $\pi_{s}\geq{0}$, $s\in\mathcal{S}$.

Note also that $\mathcal{N}_{\delta}(s')$ is a continuity set of $Q\Pi(s,\cdot)$, i.e., $Q\Pi(s,\partial\mathcal{N}_{\delta}(s'))=0$. Thus, from Portmanteau theorem, given that $QP^{t}\Rightarrow Q\Pi$, $$Q\Pi(s,\mathcal{N}_{\delta}(s')) = \lim_{t\to\infty}QP^{t}(s,\mathcal{N}_{\delta}(s')) = \hat{P}_{ss'}.$$  If we also define $\pi_s \df \hat{\mu}(\mathcal{N}_{\delta}(s))$, then
$$\pi_{s'} = \hat{\mu}(\mathcal{N}_{\delta}(s')) = \sum_{s\in\mathcal{S}}\pi_s Q\Pi(s,\mathcal{N}_{\delta}(s')) = \sum_{s\in\mathcal{S}}\pi_s\hat{P}_{ss'},$$ which shows that $\pi$ is an invariant distribution of $\hat{P}$, i.e., $\pi=\pi\hat{P}$.

It remains to establish uniqueness of the invariant distribution of $Q\Pi$. Note that the set $\mathcal{S}$ of pure strategy states is isomorphic with the set $\mathcal{A}$ of action profiles. If agent $i$ trembles (as t.p.f. $Q$ dictates), then all actions in $\mathcal{A}_i$ have positive probability of being selected, i.e., $Q(\alpha,(\alpha_i',\alpha_{-i}))>0$ for all $\alpha_i'\in\mathcal{A}_i$ and $i\in\cI$. It follows by Proposition~\ref{Pr:ConvergenceToPSS} that $Q\Pi(\alpha,(\alpha_i',\alpha_{-i}))>0$ for all $\alpha_i'\in\mathcal{A}_i$ and $i\in\cI$. Finite induction then shows that $(Q\Pi)^{n}(\alpha,\alpha')>0$ for all $\alpha,\alpha'\in\cA$. It follows that if we restrict the domain of $Q\Pi$ to $\mathcal{S}$, it defines an irreducible stochastic matrix. Therefore, $Q\Pi$ has a unique invariant distribution. \QED
\end{proof}

\subsection{Proof of Theorem~\ref{Th:StochasticStability}}	\label{sec:Proof:StochasticStability}

Theorem~\ref{Th:StochasticStability}(a)--(b) is a direct implication of Propositions~\ref{Pr:WeakLimitPointsOfPerturbedInvariantMeasures}--\ref{Pr:UniqueInvariantPMofQPi}.

\section{Conclusions \& Future Work}    \label{sec:Conclusions}

In this paper, we considered a class of reinforcement-learning algorithms that belong to the family of learning automata, and we provided an explicit characterization of the invariant probability measure of its induced Markov chain. Through this analysis, we demonstrated convergence (in a weak sense) to the set of pure-strategy states, overcoming prior restrictions necessary under an ODE-approximation analysis, such as the existence of a potential function. Thus, we opened up new possibilities for equilibrium selection through this type of algorithms that goes beyond the fine class of potential games.

Although the set of pure-strategy-states (which are the stochastically-stable states) may contain non-Nash pure strategy profiles, a follow-up analysis that excludes convergence to such pure-strategy-states may be performed (similarly to the analysis presented in \cite{ChasparisShammaRantzer15} for diminishing step size).

\bibliographystyle{IEEEtran}
\bibliography{2016_SSReinforcementLearning_Bibliography.bib}

\begin{thebibliography}{10}
\providecommand{\url}[1]{#1}
\csname url@rmstyle\endcsname
\providecommand{\newblock}{\relax}
\providecommand{\bibinfo}[2]{#2}
\providecommand\BIBentrySTDinterwordspacing{\spaceskip=0pt\relax}
\providecommand\BIBentryALTinterwordstretchfactor{4}
\providecommand\BIBentryALTinterwordspacing{\spaceskip=\fontdimen2\font plus
\BIBentryALTinterwordstretchfactor\fontdimen3\font minus
  \fontdimen4\font\relax}
\providecommand\BIBforeignlanguage[2]{{%
\expandafter\ifx\csname l@#1\endcsname\relax
\typeout{** WARNING: IEEEtran.bst: No hyphenation pattern has been}%
\typeout{** loaded for the language `#1'. Using the pattern for}%
\typeout{** the default language instead.}%
\else
\language=\csname l@#1\endcsname
\fi
#2}}

\bibitem{Tsetlin73}
M.~Tsetlin, \emph{Automaton Theory and Modeling of Biological Systems}.\hskip
  1em plus 0.5em minus 0.4em\relax Academic Press, 1973.

\bibitem{Narendra89}
K.~Narendra and M.~Thathachar, \emph{Learning Automata: An introduction}.\hskip
  1em plus 0.5em minus 0.4em\relax Prentice-Hall, 1989.

\bibitem{Arthur93}
W.~B. Arthur, ``On designing economic agents that behave like human agents,''
  \emph{J. Evolutionary Econ.}, vol.~3, pp. 1--22, 1993.

\bibitem{BorgersSarin97}
T.~B{\"{o}}rgers and R.~Sarin, ``Learning through reinforcement and replicator
  dynamics,'' \emph{J. Econ. Theory}, vol.~77, no.~1, pp. 1--14, 1997.

\bibitem{Erev98}
I.~Erev and A.~Roth, ``Predicting how people play games: reinforcement learning
  in experimental games with unique, mixed strategy equilibria,'' \emph{Amer.
  Econ. Rev.}, vol.~88, pp. 848--881, 1998.

\bibitem{HopkinsPosch05}
E.~Hopkins and M.~Posch, ``Attainability of boundary points under reinforcement
  learning,'' \emph{Games Econ. Behav.}, vol.~53, pp. 110--125, 2005.

\bibitem{Beggs05}
A.~Beggs, ``On the convergence of reinforcement learning,'' \emph{J. Econ.
  Theory}, vol. 122, pp. 1--36, 2005.

\bibitem{ThathacharSastry04}
M.~Thathachar and P.~Sastry, \emph{Networks of Learning Automata: Techniques
  for Online Stochastic Optimization}.\hskip 1em plus 0.5em minus 0.4em\relax
  Kluwer Academic Publishers, 2004.

\bibitem{ChasparisShamma11_DGA}
G.~Chasparis and J.~Shamma, ``Distributed dynamic reinforcement of efficient
  outcomes in multiagent coordination and network formation,'' \emph{Dynamic
  Games and Applications}, vol.~2, no.~1, pp. 18--50, 2012.

\bibitem{Leslie04}
D.~Leslie, ``Reinforcement learning in games,'' Ph.D. dissertation, School of
  Mathematics, University of Bristol, 2004.

\bibitem{ChasparisShammaRantzer15}
G.~C. Chasparis, J.~S. Shamma, and A.~Rantzer, ``Nonconvergence to saddle
  boundary points under perturbed reinforcement learning,'' \emph{Int. J. Game
  Theory}, vol.~44, no.~3, pp. 667--699, 2015.

\bibitem{Posch97}
M.~Posch, ``Cycling in a stochastic learning algorithm for normal form games,''
  \emph{J. Evolutionary Econ.}, vol.~7, pp. 193--207, 1997.

\bibitem{Sastry94}
P.~Sastry, V.~Phansalkar, and M.~Thathachar, ``Decentralized learning of {N}ash
  equilibria in multi-person stochastic games with incomplete information,''
  \emph{IEEE Trans. Syst. Man Cybern.}, vol.~24, no.~5, pp. 769--777, 1994.

\bibitem{verbeeck_exploring_2007}
K.~Verbeeck, A.~Nowé, J.~Parent, and K.~Tuyls,
  ``\BIBforeignlanguage{en}{Exploring selfish reinforcement learning in
  repeated games with stochastic rewards},''
  \emph{\BIBforeignlanguage{en}{Autonomous Agents and Multi-Agent Systems}},
  vol.~14, no.~3, pp. 239--269, Apr. 2007.

\bibitem{hu_nash_2003}
J.~Hu and M.~P. Wellman, ``Nash {Q}-learning for general-sum stochastic
  games,'' \emph{J. Machine Learning Research}, vol.~4, no. Nov, pp.
  1039--1069, 2003.

\bibitem{littman_markov_1994}
M.~L. Littman, ``Markov games as a framework for multi-agent reinforcement
  learning,'' in \emph{Proc. Int. Conf. Machine Learning}.\hskip 1em plus 0.5em
  minus 0.4em\relax Morgan Kaufmann, 1994, pp. 157--163.

\bibitem{chapman_convergent_2013}
A.~C. Chapman, D.~S. Leslie, A.~Rogers, and N.~R. Jennings,
  ``\BIBforeignlanguage{en}{Convergent {Learning} {Algorithms} for {Unknown}
  {Reward} {Games}},'' \emph{\BIBforeignlanguage{en}{SIAM J. Control Optim.}},
  vol.~51, no.~4, pp. 3154--3180, Jan. 2013.

\bibitem{MondererShapley96}
D.~Monderer and L.~Shapley, ``Potential games,'' \emph{Games Econ. Behav.},
  vol.~14, pp. 124--143, 1996.

\bibitem{leslie_individual_2005}
D.~Leslie and E.~Collins, ``Individual {Q}-{Learning} in {Normal} {Form}
  {Games},'' \emph{SIAM J. Control Optim.}, vol.~44, no.~2, pp. 495--514, Jan.
  2005.

\bibitem{marden_payoff_2009}
J.~R. Marden, H.~P. Young, G.~Arslan, and J.~S. Shamma, ``Payoff based dynamics
  for multi-player weakly acyclic games,'' \emph{SIAM J. Control Optim.},
  vol.~48, no.~1, pp. 373--396, 2009.

\bibitem{ChasparisAriShamma13_SIAM}
G.~Chasparis, A.~Arapostathis, and J.~Shamma, ``Aspiration learning in
  coordination games,'' \emph{SIAM J. Control and Optim.}, vol.~51, no.~1,
  2013.

\bibitem{Lerma03}
O.~Hernandez-Lerma and J.~B. Lasserre, \emph{Markov Chains and Invariant
  Probabilities}.\hskip 1em plus 0.5em minus 0.4em\relax Birkhauser Verlag,
  2003.

\end{thebibliography}

\end{document}